\documentclass{article}

% if you need to pass options to natbib, use, e.g.:
\PassOptionsToPackage{numbers, compress}{natbib}
\usepackage[preprint]{neurips_2026}

% to avoid loading the natbib package, add option nonatbib:
%    \usepackage[nonatbib]{neurips_2026}

\usepackage[utf8]{inputenc} % allow utf-8 input
\usepackage[T1]{fontenc}    % use 8-bit T1 fonts
\usepackage{url}            % simple URL typesetting
\usepackage{booktabs}       % professional-quality tables
\usepackage{amsfonts}       % blackboard math symbols
\usepackage{nicefrac}       % compact symbols for 1/2, etc.
\usepackage{microtype}      % microtypography
\usepackage{xcolor}         % colors

%additional package(s)
\usepackage{graphicx}
\usepackage{algorithm}
\usepackage{algorithmic}
\usepackage{amsmath}
\usepackage{amsthm}
\newtheorem{proposition}{Proposition}   % for math
\newtheorem{definition}{Definition}
\newtheorem{lemma}{Lemma}
\newtheorem{theorem}{Theorem}

\usepackage[utf8]{inputenc} % allow utf-8 input
\usepackage{booktabs}       % professional-quality tables
\usepackage{nicefrac}       % compact symbols for 1/2, etc.
\usepackage{microtype}      % microtypography
\usepackage{xcolor}         % colors
\usepackage{mathtools}
\usepackage{multirow}
\usepackage{bm}

\usepackage{float}      % float table
\usepackage{comment}    % for multi-line comment
\usepackage[table]{xcolor}
\usepackage{subcaption} % for sub figures
\usepackage{hyperref}
\definecolor{academicblue}{RGB}{25,70,130}
\hypersetup{
    colorlinks=true,
    linkcolor=academicblue,
    citecolor=academicblue,
    urlcolor=black
}

% Note. For the workshop paper template, both \title{} and \workshoptitle{} are required, with the former indicating the paper title shown in the title and the latter indicating the workshop title displayed in the footnote. 
\title{Vector Retrieval with Similarity and Diversity: How Hard Is It?}

% The \author macro works with any number of authors. There are two commands
% used to separate the names and addresses of multiple authors: \And and \AND.
%
% Using \And between authors leaves it to LaTeX to determine where to break the
% lines. Using \AND forces a line break at that point. So, if LaTeX puts 3 of 4
% authors names on the first line, and the last on the second line, try using
% \AND instead of \And before the third author name.

\author{
Hang Gao$^{1}$ \qquad Dong Deng$^{1}$ \qquad Yongfeng Zhang$^{1}$\\
$^{1}$Rutgers University \\
\texttt{\{h.gao, dong.deng, yongfeng.zhang\}@rutgers.edu}
}

\begin{document}

\maketitle

\begin{abstract}
    Dense vector retrieval is an important building block of modern machine learning systems, underlying applications ranging from semantic search to retrieval-augmented generation and knowledge-intensive reasoning. Beyond retrieving items that are individually similar to a query, many applications require a set of results that is also diverse, complementary, and collectively informative. Balancing similarity and diversity is therefore central to effective retrieval, but remains challenging to optimize in a stable and theoretically grounded way. Maximal Marginal Relevance (MMR) is a widely adopted heuristic for this problem, yet its reliance on a manually tuned parameter leads to optimization fluctuations and unpredictable retrieval results. More broadly, existing methods provide limited theoretical insight into how similarity and diversity interact in dense vector spaces, leaving the joint optimization problem insufficiently understood. To address these challenges, this paper introduces a novel approach that characterizes both constraints simultaneously by maximizing the similarity between the query vector and the sum of the selected candidate vectors. We formally define this optimization problem, Vector Retrieval with Similarity and Diversity (VRSD), and prove that it is NP-complete, establishing a rigorous theoretical bound on the inherent difficulty of this dual-objective retrieval. Subsequently, we present a parameter-free heuristic algorithm to solve VRSD. Extensive evaluations on multiple datasets, incorporating both objective geometric metrics and LLM-simulated subjective assessments, demonstrate that our VRSD heuristic consistently outperforms established baselines, including MMR and Determinantal Point Processes (k-DPP).
\end{abstract}

\section{Introduction}
Dense vector retrieval has emerged as a foundational mechanism in Natural Language Processing (NLP), acting as a critical driver for Large Language Models (LLMs). Its application has driven significant performance gains across diverse domains, most notably in open-domain question answering \citep{chen2017reading,guu2020retrieval,khattab2020colbert,ding2020rocketqa,izacard2020leveraging}. Furthermore, the advent of Retrieval-Augmented Generation (RAG) \citep{lewis2020retrieval, karpukhin2020dense, mao2020generation, trivedi2023interleaving, edge2024local} has underscored the indispensability of accurate retrieval in knowledge-intensive tasks to effectively ground and improve generation quality.

However, optimizing for pure relevance is often insufficient. Incorporating diversity is essential to maximize informational coverage and mitigate semantic redundancy within the retrieved context window \citep{santos2015search}. Rather than being mutually exclusive, similarity and diversity share a conceptual parallel with the exploration–exploitation trade-off in reinforcement learning \citep{sutton2018reinforcement}. Consequently, engineering an optimal balance between targeting exact matches and exploring diverse facets remains a central challenge in modern retrieval systems.

Maximal Marginal Relevance (MMR) \citep{carbonell1998use} is the standard approach for this balance, widely deployed in vector databases (e.g., Qdrant \citep{QdrantDocs}, Pinecone \citep{pinecone2023}, Milvus \citep{wang2021milvus}, Weaviate \citep{de2021weaviate}) and frameworks such as LangChain \citep{chase2022langchain} and LlamaIndex \citep{Liu_LlamaIndex_2022}. MMR uses a parameter, \(\lambda\), to weigh relevance and diversity. However, the optimal  varies across scenarios and cannot be known a priori \citep{deselaers2009jointly, ye2023complementary, wang2025diversityenhancesllmsperformance}. Furthermore, recent work \citep{rubin2021learning,wang2023learning} suggests framing retrieval as a combinatorial optimization problem rather than independent selections, as examples in the context influence each other. Inspired by this, we propose using the sum vector to characterize both similarity and diversity in vector retrieval. In essence, this involves maximizing the similarity between the sum vector of the selected vectors and the query vector, and maximizing the similarity of the sum vector to the query vector imposes a similarity constraint. At the same time, from a geometric perspective, the requirement for the sum vector to be similar to the query vector means that the selected vectors approach the query vector from different directions, thus imposing a diversity constraint. 

In summary, our main contributions are as follows.
\begin{itemize}
\item Novel Unified Framework: We propose VRSD, a parameter-free vector retrieval approach that naturally unifies similarity and diversity constraints by maximizing the alignment between the query and the sum of the retrieved vectors.

\item Theoretical Complexity Bound: We formally define the VRSD optimization problem and theoretically prove it is NP-complete via reduction from the subset sum problem, highlighting the inherent difficulty of simultaneously achieving relevance and diversity.

\item Efficient Heuristic and Empirical Validation: We introduce an efficient heuristic algorithm and demonstrate that it consistently outperforms established baselines (MMR and k-DPP) on multiple scientific QA datasets, validated through both objective geometric metrics and LLM-simulated human judgments.
\end{itemize}

\section{Theoretical analysis of MMR}
\subsection{Limitations of MMR}

MMR \citep{carbonell1998use} addresses the balance between relevance and diversity by employing "marginal relevance" as an evaluation metric. The metric is defined as a linear combination of independently measured relevance and novelty, formulated as Eq.\ref{eq:1}:
\begin{equation}
\label{eq:1}
\begin{split}
\text{MMR}=\arg\max_{d_i \in R \setminus S} [ \lambda \cdot \text{Sim}_1(d_i, q) - (1 - \lambda) \cdot \max_{d_j \in S} \text{Sim}_2(d_i, d_j) ].
\end{split}
\end{equation}

Here, $R$ denotes the candidate set and $S$ is the subset of already selected elements. The expression $R \setminus S$ represents the set difference, that is, the set of unselected elements in $R$. $\text{Sim}_1$ and $\text{Sim}_2$ are similarity metrics. The parameter $\lambda \in [0,1]$ is a weighting factor. When $\lambda = 1$, MMR incrementally computes a standard relevance-ranked list. In contrast, when $\lambda = 0$, it produces a ranking that maximizes the diversity between elements in $S$.

\begin{figure*}%{r}{0.6\textwidth}
    \centering
    \includegraphics[width=1 \textwidth]{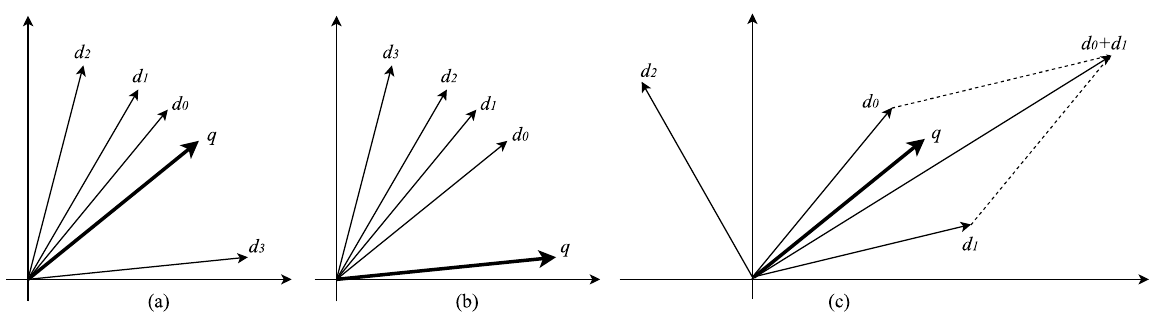}
    \caption{An analysis of the Maximal Marginal Relevance. (a) The candidate vectors are located on different sides of the query vector. (b) The candidate vectors are located on the same side of the query vector. (c) An example for MMR and the sum vector.
    }
    \label{fig:fig1}
    \vspace{-4mm}
\end{figure*}

The challenge lies in selecting an appropriate \(\lambda\) to achieve the desired balance between relevance and diversity, particularly in high-dimensional vector spaces where the impact of varying \(\lambda\) is less predictable. This variability in \( \lambda \) leads to fluctuations in retrieval results, resulting in unpredictable consequences, which can be illustrated by a simple example. Typically, \( \lambda \) is pre-set at a value of 0.5 in many MMR implementations, a choice that comes from the algorithm's foundational design. 
\begin{comment}
It is important to note that at \( \lambda = 1 \), the algorithm exclusively prioritizes relevance, while at \( \lambda = 0 \), it focuses entirely on diversity. Examine the performance of the MMR algorithm in the typical midpoint setting of \( \lambda = 0.5 \). For clarity and ease of comprehension, we model the retrieval process within a two-dimensional vector space, though the principles observed are equally applicable to more complex, higher-dimensional scenarios.
\end{comment}
As illustrated in Figure \ref{fig:fig1}(a), consider \( q \) as the query vector and \( d_0 \) to \( d_3 \) as candidate vectors that exceed the relevance threshold, collectively represented as \( R = \{d_0, d_1, d_2, d_3\} \), with \( S \) initially empty. Using the MMR algorithm, \( d_0 \) is first selected due to its highest relevance to \( q \), determined using cosine similarity as a measure. Subsequently, \( d_3 \) is chosen over \( d_1 \), despite \( d_1 \) having a smaller angle with \( q \) and thus greater direct relevance. The selection of \( d_3 \) is influenced by the fact that the cumulative relevance between \( d_1 \) and \( d_0 \) significantly exceeds that between \( d_3 \) and \( d_0 \), resulting in a higher MMR value for \( d_3 \) according to the formula. 

However, as depicted in Figure \ref{fig:fig1}(b), with \( q \) serving as the query vector and \( R = \{d_0, d_1, d_2, d_3\} \) representing the initial set of candidate vectors, \( d_0 \) is selected first due to its maximal relevance to \( q \). The selection using the MMR proceeds as follows: with \( \lambda = 0.5 \), \( S = \{d_0\} \), and \( R\setminus S = \{d_1, d_2, d_3\} \), the formula can be formed as Eq.\ref{eq:2}:
\begin{equation}
\label{eq:2}
\mathrm{MMR} = \arg\max_{i \in \{1,2,3\}}\Bigl[ 0.5 \cdot (\mathrm{Sim}_1(d_i, q) - \mathrm{Sim}_2(d_i, d_0)) \Bigr].
\end{equation}

Given that \( d_0 \), \( d_1 \), \( d_2 \), and \( d_3 \) are positioned on the same side relative to \( q \), and assuming that both \( \text{Sim}_1 \) and \( \text{Sim}_2 \) denote cosine similarity, let \( \theta \) represent the angle between \( d_0 \) and \( q \), and \( x \) denote the angle between \( d_i \) (i.e., \( d_1, d_2, d_3 \)) and \( d_0 \). Thus, we get the Eq.\ref{eq:3}:
\begin{equation}
\label{eq:3}
\hspace*{-3mm}
\text{MMR} = \arg\max_{\mathclap{i=1,2,3}} \big[ 0.5 \cdot (\cos(d_i, q) - \cos(d_i, d_0)) \big]
= \arg\max_{\mathclap{i=1,2,3}} \big[ 0.5 \cdot (\cos(x+\theta) - \cos(x)) \big]
\end{equation}

The function \( f(x) = \cos(x+\theta) - \cos(x) \), with its derivative \( f'(x) = -\sin(x+\theta) + \sin(x) \), assumes that \( x \) and \( x+\theta \) lie within \((0, \pi/2)\). Consequently, \( f'(x) < 0 \), indicating that for vectors on the same side of \( q \), their MMR values decrease as the angle with \( q \) increases. Thus, following the selection of \( d_0 \), the subsequent choices are \( d_1 \), then \( d_2 \), and so on. This sequence suggests that relevance predominantly influences the selection.

The real challenge in vector retrieval emerges when \( \lambda \neq 0.5 \). The selection among the candidate vectors \( d_1 \), \( d_2 \), and \( d_3 \) is critically dependent on both \( \lambda \) and \( \theta \), which complicates the determination of the most appropriate candidate. As shown in Figure \ref{fig:fig1}(c), let $q$ be the query vector and $d_0$, $d_1$, and $d_2$ be the candidate vectors. If the parameter $\lambda$ in MMR is set to a small value, indicating a stronger preference for diversity, the selection process can proceed as follows: $d_0$ is selected first due to its highest similarity to $q$. Subsequently, $d_2$ is chosen over $d_1$ because it is farther from $d_0$, thus exhibiting greater diversity. However, this result is often suboptimal in real-world applications, where the desired diversity is typically based on similarity. That is, diversity should be constrained by relevance to the query. Selecting candidates that are too dissimilar to the query undermines the purpose of retrieval and is of limited practical value.
\begin{comment}
This dependency means that different query vectors and the distribution of initial candidate vectors require varying \( \lambda \) values to achieve optimal performance. Consequently, it is impractical to predict the value of \( \lambda \) in advance or to determine a precise direction for optimization. 

This issue becomes even more pronounced in higher-dimensional vector spaces, where perturbations induced by changing \( \lambda \) complicate the identification of an optimal adjustment direction. This inherent complexity underscores the need for adaptive retrieval strategies that dynamically adjust \( \lambda \) based on the characteristics of the query and candidate vector distributions.
\end{comment}

\subsection{Sum Vector for Retrieval}
To address the challenges posed by MMR, we propose using the sum vector to simultaneously capture both similarity and diversity. Specifically, during the selection process from candidate vectors, we compute the cosine similarity between the query vector and the sum of the selected vectors, in order to maximize this similarity. We again illustrate this with Figure\ref{fig:fig1}(c). As before, $q$ is the query vector, and $d_0$, $d_1$, and $d_2$ are the candidate vectors. Using the sum vector approach, $d_0$ is selected first due to its highest similarity to $q$. Then, based on the rule of vector addition, the sum of $d_0$ and $d_1$ is evidently more similar to $q$ than the sum of $d_0$ and $d_2$, and therefore $d_1$ is selected next - unlike in MMR, where $d_2$ could be chosen.

The key insight lies in the properties of vectors and vector addition. Geometrically, as illustrated in Figure \ref{fig:fig1}(c), the sum of two vectors lies between them. Furthermore, we can rigorously prove this proposition using algebraic methods. 
\begin{proposition}
\label{proposition1}
In an $n$-dimensional vector space, the sum of two vectors necessarily lies between the two original vectors. That is, let $u, v \in \mathbb{R}^n$, and $s=u+v$, so $s$ must be between $u$ and $v$.
\end{proposition}
The main idea of the proof is that $u$, $v$, and $s$ lie in the same hyperplane, and then it can be proved that the angle between $s$ and $u$ is less than the angle between $u$ and $v$, and the angle between $s$ and $v$ is also less than the angle between $u$ and $v$, therefore, $s$ must be between $u$ and $v$. A detailed formal proof of Proposition \ref{proposition1} is provided in the Appendix \ref{proof}.

Therefore, from the algorithmic perspective, consider a query vector $q$ and a set of candidate vectors $d_0, d_1, \ldots, d_n$, sorted by their similarity to $q$. When the goal is to select a subset of vectors such that the sum vector is maximally similar to $q$, the algorithm first selects $d_0$. For the second selection, the next best candidate $d_x$ is chosen such that the new sum $s = d_0 + d_x$ is as close as possible to $q$. Based on Proposition \ref{proposition1}, this sum vector $s$ lies between $d_0$ and $d_x$, which implies that $d_0$ and $d_x$ are positioned on either side of $s$, and hence $d_0$ and $d_x$ are positioned on either side of $q$ when optimized. This introduces an implicit diversity constraint, since selected vectors are encouraged to approach $q$ from different directions.

Repeating this process means that each new vector is selected so that its addition pushes the sum vector closer to the query while maintaining angular spread, achieving a balance between similarity and diversity.
\begin{comment}
\textbf{Proposition 1} implies that maximizing the similarity between the sum vector and the query vector requires the selected vectors to approach the query vector from different directions. 
This reflects a degree of dissimilarity among the selected vectors, making them complementary to each other and thereby ensuring diversity. As a result, the sum vector naturally incorporates both similarity and diversity constraints.
\end{comment}

In fact, this property of vector addition has been observed in many influential works. For example, in one of the most important studies on word embeddings, word2vec \citep{Mikolov2013EfficientEO}, the famous example "king - man + woman = queen" demonstrates how semantic relationships can be captured through simple vector arithmetic. Similarly, our use of the sum vector captures the joint semantics of selected vectors. By evaluating the similarity between the sum vector and the query, we explicitly enforce relevance, while the geometric nature of vector addition imposes an adaptive diversity constraint.

\section{Vectors retrieval with similarity and diversity}
\subsection{Problem Definition and Complexity}
To address the problem of selecting a subset of vectors from a set of candidate vectors that satisfy both similarity and diversity requirements, we refer to the MMR algorithm and several related algorithms, typically considering the following premise. The candidate vectors are identified from the entire set of vectors (\(size = N\)) using cosine similarity metrics, resulting in a subset of vectors (\(size = n\)). Consequently, this set of candidate vectors inherently exhibits a relatively high degree of similarity to the query vector. 

As mentioned above, while algorithms like MMR are widely applied in practice, these studies often lack a robust and reliable theoretical model. In other words, many approaches employ heuristic strategies or machine learning methods to arrive at a solution without providing a rigorous formal description and analysis of similarity and diversity from a theoretical perspective. Therefore, based on the aforementioned premises, we propose using the sum vector to characterize both similarity and diversity in vector retrieval. The definition of the sum vector is as follows:

\begin{definition}
\label{sum vector}
The Sum Vector: Given \(k\) vectors \(d_0, d_1, ..., d_{k-1}\), the sum vector \(d\) is the sum of these \(k\) vectors.
\end{definition}

\begin{comment}
Specifically, our goal is to maximize the similarity between the sum vector of the selected \(k\) vectors and the query vector. According to Proposition\ref{proposition1}, on the one hand, maximizing the similarity of the sum vector to the query vector imposes a similarity constraint, and on the other hand, from a geometric perspective, ensuring that the sum vector is similar to the query vector means that the selected vectors approach the query vector from different directions, thus imposing a diversity constraint. Therefore, using the sum vector to characterize similarity and diversity allows us to model complex semantic similarity and diversity through simple vector addition operations. 
\end{comment}

Next, we define the vector retrieval problem as follows:

\begin{definition}
\label{The VRSD problem}
The problem of Vector Retrieval with Similarity and Diversity (VRSD): Given a query vector \( q \) and a set of candidate vectors \( R = \{d_0, d_1, ..., d_{n-1}\} \), how to select \( k \) vectors (\(d'_0, d'_1, ..., d'_{k-1}\)) from \( R \) so that the cosine similarity between the sum vector \( d = d'_0 + d'_1 + ... + d'_{k-1}\) and \( q \) is maximized?
\end{definition}

On the surface, requiring the sum of the selected vectors to maximize cosine similarity with the query vector imposes a similarity constraint, and the sum vector also represents joint semantics, similar to the well-known example ("king - man + woman = queen") in word2vec \citep{Mikolov2013EfficientEO}. Furthermore, according to Proposition \ref{proposition1}, requiring the sum vector to be similar to the query vector implicitly imposes a diversity constraint. Therefore, the VRSD problem we define effectively captures both similarity and diversity constraints in vector retrieval. However, upon further examination of the above problem, we find that it is an NP-hard problem. In the following, we provide a theoretical proof. To this end, we first give the corresponding decision problem and prove its NP-completeness by reducing the subset sum problem to the VRSD decision problem.

\begin{definition}
\label{the decision problem}
The decision problem of vector retrieval: Given a set of candidate vectors \(R\) and a query vector \(q\), can \(k\) vectors be selected from \(R\) such that the cosine similarity between the sum vector of these \(k\) vectors and the query vector \(q\) is equal to 1? We denote instances of this vector retrieval problem by \((R, q, k)\).
\end{definition}

Next, we will prove that this decision problem is NP-complete. For the sake of concise proof, we further restrict the components of vectors to integers. The proof strategy is to reduce the subset sum problem \cite{cormen2009introduction} to this decision problem.

\begin{definition}
\label{the subset problem}
The subset sum problem: Given an integer set \(T\) and another integer \(t\), does there exist a nonempty subset whose sum of elements is equal to \(t\)? We denote instances of the subset sum problem as \((T, t)\).
\end{definition}

For convenience of proof, we also need to define a modified version of the subset sum problem, called the \(k\)-subset sum problem.

\begin{definition}
\label{k sum problem}
\(k\)-subset sum problem: Given an integer set \(T\) and another integer \(t\), does there exist a non-empty subset of size \(k\) (i.e., the cardinality of the subset is \(k\)), whose sum of elements equals \(t\)? We denote instances of the \(k\)-subset sum problem as \((T, t, k)\).
\end{definition}

\begin{lemma}
The \(k\)-subset sum problem is NP-complete.
\end{lemma}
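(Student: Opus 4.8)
The plan is to prove NP-completeness in the two standard halves: membership in NP and NP-hardness. Membership is the easy half and I would dispatch it first. Given an instance $(T, t, k)$, a certificate is simply a subset $U \subseteq T$; a verifier checks in time polynomial in the input size that $|U| = k$ and that $\sum_{a \in U} a = t$. Both checks are clearly polynomial, so the $k$-subset sum problem lies in NP.

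For NP-hardness I would reduce the ordinary subset sum problem, which is already known to be NP-complete, to the $k$-subset sum problem. Given an instance $(T, t)$ with $T = \{a_1, \ldots, a_n\}$, the conceptual gap is that a subset sum solution may have any cardinality between $1$ and $n$, whereas $k$-subset sum insists on exactly $k$ elements. The idea is to introduce filler elements that pad a subset up to a prescribed size without altering its sum. Concretely, I would form $T' = T \cup \{z_1, \ldots, z_n\}$ where each $z_j$ behaves like a zero, and set $k = n$: a subset $S \subseteq T$ of size $j$ summing to $t$ extends to a size-$n$ subset of $T'$ by adjoining $n - j$ fillers, and conversely any size-$n$ subset of $T'$ summing to $t$ restricts to a subset of $T$ with the same sum. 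Assuming without loss of generality that the integers are positive and $t > 0$ (subset sum remains NP-complete under this restriction), the restricted subset is automatically non-empty, so $(T, t)$ is a yes-instance if and only if $(T', t, n)$ is, and the construction is plainly computable in polynomial time.

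The step I expect to be the main obstacle is making the filler gadget rigorous rather than merely intuitive, because two technical points must be controlled at once: the fillers must be genuinely distinct integers if $T'$ is to be a bona fide set, yet they must not perturb the attainable sums, and the degenerate target must not admit spurious all-filler selections. I would resolve both by a base-$B$ encoding: replace each $a_i$ by $x_i = a_i \cdot B^{n} + B^{\,i-1}$, introduce a companion $y_i = B^{\,i-1}$, pick $B > 2$ so that no carrying occurs across the low-order ``slot'' digits, and set the target to $t \cdot B^{n} + \sum_{i=1}^{n} B^{\,i-1}$ with $k = n$. The low-order digits then force exactly one of $\{x_i, y_i\}$ to be chosen for each index $i$, which simultaneously pins the cardinality to $n$ and identifies the chosen $x_i$ with a genuine subset of $T$, while the high-order part forces that subset to sum to $t$. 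Verifying the no-carry argument and the clean separation between the high-order sum contributions and the low-order slot contributions (which is where $B^{n}$ must dominate) is the routine but delicate heart of the reduction, and it is the part I would write out most carefully.
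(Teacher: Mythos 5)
Your proposal is correct, but it takes a genuinely different route from the paper. The paper does not build a many-one reduction at all: it observes that a subset sum instance $(T,t)$ is a yes-instance if and only if at least one of the $|T|$ instances $(T,t,1),(T,t,2),\ldots,(T,t,|T|)$ of $k$-subset sum is a yes-instance, so a polynomial-time algorithm for $k$-subset sum would solve subset sum in polynomial time. That is a one-to-many, disjunctive (Turing-style) reduction: it is very short, needs no gadgets and no positivity assumptions, but strictly speaking it establishes hardness under Cook/truth-table reductions rather than under the textbook Karp notion of NP-completeness. Your reduction, by contrast, is a genuine many-one reduction: the pairing gadget $x_i = a_i B^n + B^{i-1}$, $y_i = B^{i-1}$ with $k=n$ and target $tB^n + \sum_{i=1}^n B^{i-1}$ forces exactly one element of each pair to be chosen (slot coefficients lie in $\{0,1,2\}$, so with $B\ge 3$ there are no carries, and $2\sum_{i=1}^n B^{i-1} < B^n$ keeps the low-order block from contaminating the high-order block), which identifies size-$n$ solutions of the constructed instance with subsets of $T$ summing to $t$; the restriction to positive integers and $t>0$ correctly rules out the all-filler solution and guarantees non-emptiness. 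Your version costs more work --- the no-carry and block-separation checks you flag are exactly what must be written out, and the distinctness of all $2n$ constructed integers should be stated explicitly --- but it buys NP-hardness in the standard many-one sense and maps one instance to one instance, whereas the paper's argument is the more economical of the two. Both are sound hardness arguments for the purpose the lemma serves in the paper.
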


\begin{proof} 
We reduce the subset sum problem(Def.\ref{the subset problem}) to the \(k\)-subset sum problem(Def.\ref{k sum problem}) .

1. Clearly, the \(k\)-subset sum problem is polynomial-time verifiable.

2. Reducing the subset sum problem to the \(k\)-subset sum problem.

For any instance of the subset sum problem \((T, t)\), where \(T=\{t_1, t_2, ..., t_n\}, |T|=n\),
we can transform it into an instance of the \(k\)-subset sum problem \((T',t',k)\), where \(T'=T \cup \{0, 0, ..., 0\}\) (that is, add $n$ zero elements to $T$), \(t'=t\), \(k=n\).

It is obvious that if \((T',t',k)\) has a yes answer, then the answer to \((T, t)\) is yes. If the answer to \((T',t',k)\) is no, then the answer to \((T, t)\) is no. Therefore, if the \(k\)-subset sum can be solved in polynomial time, then the subset sum can also be solved in polynomial time. Hence, the \(k\)-subset sum problem is NP-complete.
\end{proof} 
Now it is time to prove the NP-completeness of the decision problem of vector retrieval.
\begin{theorem}
\label{Th_vec_ret}
The decision problem of vector retrieval is NP-complete.
\end{theorem}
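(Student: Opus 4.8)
The plan is to prove the two halves of NP-completeness separately: first that the problem lies in NP, and then that it is NP-hard by a polynomial-time reduction from the $k$-subset sum problem, which Lemma~1 already certifies to be NP-complete. Membership in NP is the easy half. Given a purported selection of $k$ vectors as a certificate, one computes their sum vector $s$ and must decide whether the cosine similarity between $s$ and $q$ equals $1$. Since cosine similarity equals $1$ exactly when $s$ and $q$ are positive scalar multiples of one another, and since this is precisely the equality case of the Cauchy--Schwarz inequality, the test reduces to checking the two integer identities $(s \cdot q)^2 = (s \cdot s)(q \cdot q)$ and $s \cdot q > 0$. No square roots are needed, so for integer-valued vectors the whole verification is exact and runs in polynomial time.

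For the hardness direction I would take an arbitrary instance $(T, t, k)$ of the $k$-subset sum problem with $T = \{a_1, \ldots, a_m\}$ and build a vectors-retrieval instance $(R, q, k)$ as follows. Map each integer $a_i$ to the two-dimensional integer vector $d_i = (a_i, 1)$, set $R = \{d_1, \ldots, d_m\}$, set the query vector $q = (t, k)$, and leave the cardinality parameter $k$ unchanged. This transformation is obviously computable in polynomial time and respects the integer-component restriction imposed before the theorem, so it is a legitimate reduction candidate.

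The correctness argument turns entirely on the auxiliary second coordinate, which functions as a counter. If one selects exactly $k$ vectors $d_{i_1}, \ldots, d_{i_k}$, their sum is $s = (\sum_{j} a_{i_j},\, k)$, whose second coordinate is forced to be $k$ no matter which indices are chosen. Imposing cosine similarity equal to $1$ means $s = \lambda q$ for some $\lambda > 0$; reading off the second coordinate gives $k = \lambda k$, hence $\lambda = 1$ because $k \geq 1$ (the subset is non-empty, so $q \neq 0$ and $s \neq 0$, and cosine similarity is well defined). The first coordinate then yields $\sum_{j} a_{i_j} = t$. Conversely, any size-$k$ subset with $\sum_{j} a_{i_j} = t$ makes $s = (t, k) = q$, so the cosine similarity is exactly $1$. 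Therefore $(R, q, k)$ is a yes-instance of vectors retrieval if and only if $(T, t, k)$ is a yes-instance of $k$-subset sum, and combining this with Lemma~1 and the NP-membership above completes the proof.

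I expect the crux to be the design of the embedding rather than any computation. The key idea is to append a constant coordinate equal to $1$ to every candidate vector and to place the value $k$ in the matching coordinate of $q$, so that the rigidity of the cosine-similarity-equals-one constraint collapses the scaling freedom $\lambda$ down to the single value $1$ and thereby transcribes an \emph{exact-sum} condition into a \emph{directional-alignment} condition. A secondary technical point worth stating carefully is the square-root-free verification of cosine similarity $=1$ for integer vectors; keeping that check purely combinatorial is what guarantees both the polynomial-time certificate test for NP-membership and the clean equivalence in the reduction.
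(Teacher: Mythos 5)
Your proposal is correct and follows essentially the same route as the paper: the identical reduction from the $k$-subset sum problem, mapping each integer $a_i$ to $(a_i,1)$ and setting $q=(t,k)$ so that the constant second coordinate forces the scaling factor to equal $1$. Your treatment is in fact slightly more careful than the paper's on two minor points---making explicit that cosine similarity $1$ requires a \emph{positive} scalar multiple, and giving a square-root-free integer test $(s\cdot q)^2=(s\cdot s)(q\cdot q)$ with $s\cdot q>0$ for the NP-membership check---but these are refinements of the same argument, not a different one.
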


\begin{proof}
We reduce the \(k\)-subset sum problem(Def.\ref{k sum problem}) to the decision problem of vector retrieval(Def.\ref{the decision problem}).

1. The answer to vector retrieval is polynomial-time verifiable. If the answer provides \(k\) vectors, we can simply add these \(k\) vectors and then calculate whether the cosine similarity between the sum vector and the query vector \(q\) equals 1. This verification can be performed in polynomial time.

2. Reducing the \(k\)-subset sum problem to the decision problem of vector retrieval.

For any instance of the \(k\)-subset sum problem \((T, t, k)\), let \(T = \{t_1, t_2, \ldots, t_n\}\). We construct the set of vectors \(R\) and the query vector \(q\) as Eq.\ref{eq:4}:
\begin{equation}
\label{eq:4}
R = \{[t_1, 1], [t_2, 1], \ldots, [t_n, 1]\}, q = [t, k]\
\end{equation}

The decision problem of vector retrieval \((R, q, k)\) asks whether there exist \(k\) vectors such that the sum vector (denoted as \(d\)) of these vectors and the query vector \(q\) have a cosine similarity of 1. According to the definition of cosine similarity, \(cos\_similarity = \frac{d \cdot q}{|d| \cdot |q|}\). The cosine similarity between \(d\) and \(q\) equals 1 if and only if \(d = \alpha q\), where \(\alpha\) is a scalar.
Therefore, if vector retrieval provides an affirmative answer \(d = \alpha q\), we can get the Eq.\ref{eq:5}, 
\begin{equation}
\label{eq:5}
\begin{split}
d = [t'_1, 1] + [t'_2, 1] + ... + [t'_k, 1] = \alpha [t, k]
\Rightarrow [(t'_1 + ... + t'_k), k] = \alpha [t, k].
\end{split}
\end{equation}

\( [t'_1, 1] \ldots [t'_k, 1] \) are the selected \( k \) vectors. This implies that \(\alpha = 1\) and \(t'_1 + \ldots + t'_k = t\). Thus, this provides an affirmative answer to the \(k\)-subset sum problem instance \((T, t, k)\). Conversely, if vector retrieval provides a negative answer, then a negative answer to the \(k\)-subset sum problem can also be obtained. The above reduction process can be clearly completed in polynomial time. \textbf{Therefore, the decision problem of vector retrieval is NP complete.}
\end{proof}

\subsection{Infeasibility of Dynamic Programming}

It is well known that although the subset sum problem is NP-complete, there exists a simple pseudo-polynomial-time dynamic programming algorithm for solving it. Specifically, for the subset sum problem as defined in Def.\ref{the subset problem}, the dynamic programming recurrence is given by:
\[
DP[i, t] = \left\{
\begin{array}{l}
DP[i-1, t], t_i \text{is not selected} \\ 
DP[i-1, t - t_i], t_i \text{is selected}
\end{array} \right.
\]

Although this dynamic programming approach is essentially pseudo-polynomial, it can be effectively applied to solve the subset sum problem in certain scenarios. This raises the question: Does there exist a similarly simple dynamic programming algorithm for the VRSD problem?

Unfortunately, the answer is negative. As noted in the proof of Theorem \ref{Th_vec_ret}, particularly in Eq.\ref{eq:5}, to get the cosine similarity between the sum vector and the query vector equal to 1, we must hold $d = \alpha q$ rather than $d = q$. Consequently, if we attempt to apply a dynamic programming approach similar to that of the subset sum problem, we face the difficulty that the final sum vector cannot be determined in advance, since the scalar $\alpha$ in $d = \alpha q$ is undetermined. Therefore, a dynamic programming algorithm analogous to that of the subset sum problem does not exist for the VRSD problem.

\subsection{Heuristic algorithm for vector retrieval}
Since the vector retrieval problem \((R, q, k)\) is an NP-hard problem, it is necessary to use heuristic methods to derive feasible solutions. Specifically, given a set of candidate vectors with high similarity, the objective is to select \( k \) vectors that maximize the cosine similarity between the sum vector of the \(k\) selected vectors and the query vector. We propose a heuristic algorithm also named as Vector Retrieval with Similarity and Diversity (VRSD). In each iteration, it chooses the vector that maximizes the cosine similarity between the sum of all selected vectors and the query vector, continuing this process until the \(k\) vectors are chosen. More details about the VRSD algorithm can be found in Algorithm \ref{algo_vec_ret}. 

\begin{algorithm}[htb]
\caption{Vector Retrieval with Similarity and Diversity (VRSD)}
\label{algo_vec_ret}
\textbf{Input}: Candidate vector set \( R = \{d_0, d_1, \ldots, d_{n-1}\} \), query vector \( q \).

\textbf{Output}: \( k \) vectors such that the cosine similarity between the sum vector of these \( k \) vectors and \( q \) is maximized.
\begin{algorithmic}[1] %[1] enables line numbers
\STATE \( S = \{ \} \)
%\STATE \( R = R \setminus \{d_0\} \)
\FOR{\( i = 1 \) to \( k \)}
    \STATE \( s = \sum S \) \COMMENT{Sum of all vectors in \( S \)}
    \STATE \( \text{max\_cos} = -1 \)
    \STATE \( p = \text{null} \) \COMMENT{Initialize \( p \)}
    \FOR{\( v \) in \( R \)}
        \STATE \( t = s + v \) \COMMENT{Temporary vector}
        \IF{\( \cos(t, q) > \text{max\_cos} \)}
            \STATE \( \text{max\_cos} = \cos(t, q) \)
            \STATE \( p = v \)
        \ENDIF
    \ENDFOR
    \STATE \( S = S \cup \{p\} \) \COMMENT{Add \( p \) to the set \( S \)}
    \STATE \( R = R \setminus \{p\} \) \COMMENT{Remove \( p \) from \( R \)}
\ENDFOR
\STATE return \(S\)
\end{algorithmic}
\end{algorithm}
\vspace{-3mm}

\subsection{Time complexity analysis of VRSD}
As depicted in Algorithm \ref{algo_vec_ret}, the time complexity of the VRSD algorithm is \( k \times |R| = k \times n \), which accounts for the initial step of selecting \( n \) candidate vectors from the entire set of vectors (size = \( N \)) based on similarity. Given that \( N \gg n > k \), the computational load of subsequent steps in Algorithm \ref{algo_vec_ret} is minimal compared to the MMR algorithm, which also selects \( k \) vectors from \( |R| \) candidates, requires two iterations of maximum calculations as depicted in Eq.\ref{eq:1}—once for each candidate vector against the query vector and once against the set of vectors already selected \( |S| \). Thus, the complexity of MMR becomes \( k \times |R| \times |S| = k \times |R|^2 = k \times n^2\), indicating a marginally higher computational demand compared to VRSD.

\section{Experiments}
\subsection{Baselines and Datasets}
We benchmark VRSD against two established baselines: \textbf{MMR}, a standard greedy heuristic, and \textbf{k-DPP} \citep{kulesza2011kdpp}, a probabilistic approach modeling item repulsion (implemented via Fast-MAP \citep{chen2018fastmap}). These represent complementary strategies widely used in summarization \citep{sharghi2018seqdppimprove}. We employ the \texttt{all-mpnet-base-v2} encoder \citep{song2020mpnet}. Furthermore, ablation studies are performed using two alternative embedding models, as detailed in the Appendix \ref{comprehensive_AB}. Evaluations are conducted on three scientific QA datasets: \textbf{ARC-DA} \cite{bhakthavatsalam2021think}, \textbf{OpenBookQA} \cite{mihaylov2018openbookqa}, and \textbf{SciQ} \cite{welbl2017sciq}, using both objective metrics and LLM-simulated subjective assessments. More details can be found in Appendix \ref{app:setup_details}.

\subsection{Objective Evaluation}
The objective evaluation compares VRSD and the baselines from two complementary perspectives: complementarity-aware similarity and pairwise diversity. The first metric evaluates whether the retrieved set is collectively aligned with the query, rather than whether each retrieved item is independently similar to the query. The second metric measures the average redundancy among retrieved items. Together, these metrics allow us to examine whether a method can retrieve results that are both query-aligned and non-redundant.

\subsubsection{Complementarity-aware Similarity}
\begin{table}[t]
\centering
%\small  % 可选：使表格字体略小以节省空间
%\setlength{\tabcolsep}{9.5pt}  % 可选：减小列间距以压缩宽度
\caption{Similarity and Diversity scores on ARC, OpenBookQA, and SciQ for varying k.}
\resizebox{\textwidth}{!}{%
\begin{tabular}{l|l|cc|cc|cc}
\toprule
$k$ & \textbf{Algorithms} & \multicolumn{2}{c|}{\textbf{ARC}} & \multicolumn{2}{c|}{\textbf{OpenBookQA}} & \multicolumn{2}{c}{\textbf{SciQ}} \\
 &  & Sim. Mean & Div. Mean & Sim. Mean & Div. Mean & Sim. Mean & Div. Mean \\
\midrule
\multirow{9}{*}{6}
& MMR($\lambda$=0.2) & 0.6797 & 0.2766 & 0.6592 & 0.2737 & 0.6609 & 0.2363 \\
& MMR($\lambda$=0.3) & 0.6859 & 0.2789 & 0.6652 & 0.2785 & 0.6649 & 0.2388 \\
& MMR($\lambda$=0.4) & 0.6913 & 0.2844 & 0.6739 & 0.2825 & 0.6701 & 0.2432 \\
& \textbf{MMR($\lambda$=0.5)} & \textbf{0.6987} & \textbf{0.2988} & \textbf{0.6789} & \textbf{0.2937} & \textbf{0.6789} & \textbf{0.2549} \\
& \textbf{MMR($\lambda$=0.6)} & \textbf{0.7047} & \textbf{0.3178} & \textbf{0.6875} & \textbf{0.3148} & \textbf{0.6880} & \textbf{0.2747} \\
& MMR($\lambda$=0.7) & 0.7065 & 0.3396 & 0.6890 & 0.3427 & 0.6890 & 0.3036 \\
& MMR($\lambda$=0.8) & 0.7022 & 0.3631 & 0.6860 & 0.3636 & 0.6843 & 0.3314 \\
& MMR($\lambda$=0.9) & 0.6928 & 0.3914 & 0.6779 & 0.3868 & 0.6732 & 0.3603 \\
& k-DPP                & 0.7081 & 0.3382 & 0.6899 & 0.3455 & 0.6872 & 0.3198 \\
& \textbf{VRSD}            & \textbf{0.7161} & \textbf{0.3109} & \textbf{0.6996} & \textbf{0.3088} & \textbf{0.6987} & \textbf{0.2721} \\
\midrule
\multirow{9}{*}{12}
& MMR($\lambda$=0.2) & 0.6780 & 0.2399 & 0.6656 & 0.2390 & 0.6695 & 0.2020 \\
& MMR($\lambda$=0.3) & 0.6869 & 0.2442 & 0.6721 & 0.2437 & 0.6759 & 0.2055 \\
& MMR($\lambda$=0.4) & 0.6981 & 0.2535 & 0.6820 & 0.2500 & 0.6860 & 0.2129 \\
& \textbf{MMR($\lambda$=0.5)} & \textbf{0.7073} & \textbf{0.2661} & \textbf{0.6919} & \textbf{0.2629} & \textbf{0.6968} & \textbf{0.2268} \\
& \textbf{MMR($\lambda$=0.6)} & \textbf{0.7147} & \textbf{0.2857} & \textbf{0.7013} & \textbf{0.2847} & \textbf{0.7042} & \textbf{0.2453} \\
& MMR($\lambda$=0.7) & 0.7143 & 0.3122 & 0.7005 & 0.3074 & 0.7041 & 0.2738 \\
& MMR($\lambda$=0.8) & 0.7100 & 0.3323 & 0.6940 & 0.3291 & 0.6978 & 0.2977 \\
& MMR($\lambda$=0.9) & 0.7022 & 0.3506 & 0.6861 & 0.3456 & 0.6881 & 0.3169 \\
& k-DPP                & 0.7155 & 0.3136 & 0.7000 & 0.3162 & 0.7016 & 0.2866 \\
& \textbf{VRSD}             & \textbf{0.7332} & \textbf{0.2695} & \textbf{0.7177} & \textbf{0.2668} & \textbf{0.7235} & \textbf{0.2330} \\
\midrule
\multirow{9}{*}{18}
& MMR($\lambda$=0.2) & 0.6712 & 0.2213 & 0.6601 & 0.2210 & 0.6649 & 0.1839 \\
& MMR($\lambda$=0.3) & 0.6789 & 0.2253 & 0.6684 & 0.2254 & 0.6718 & 0.1884 \\
& MMR($\lambda$=0.4) & 0.6918 & 0.2332 & 0.6805 & 0.2352 & 0.6862 & 0.1966 \\
& \textbf{MMR($\lambda$=0.5)} & \textbf{0.7039} & \textbf{0.2475} & \textbf{0.6921} & \textbf{0.2496} & \textbf{0.6993} & \textbf{0.2098} \\
& \textbf{MMR($\lambda$=0.6)} & \textbf{0.7117} & \textbf{0.2674} & \textbf{0.6988} & \textbf{0.2697} & \textbf{0.7071} & \textbf{0.2293} \\
& MMR($\lambda$=0.7) & 0.7105 & 0.2900 & 0.6972 & 0.2900 & 0.7062 & 0.2513 \\
& MMR($\lambda$=0.8) & 0.7035 & 0.3143 & 0.6897 & 0.3108 & 0.6988 & 0.2732 \\
& MMR($\lambda$=0.9) & 0.6944 & 0.3326 & 0.6826 & 0.3246 & 0.6868 & 0.2935 \\
& k-DPP                & 0.7112 & 0.2933 & 0.6970 & 0.2963 & 0.7021 & 0.2657 \\
& \textbf{VRSD}             & \textbf{0.7344} & \textbf{0.2454} & \textbf{0.7187} & \textbf{0.2459} & \textbf{0.7290} & \textbf{0.2104} \\
\bottomrule
\end{tabular}
}
\label{tab:sim-div-combined}
\vspace{-5mm}
\end{table}

We first evaluate complementarity-aware similarity using sum-vector alignment. Given a retrieved set $\{d_1, d_2, \ldots, d_k\}$, we construct a set-level representation by summing the selected vectors:
\[
d_{\mathrm{sum}} = d_1 + d_2 + \cdots + d_k.
\]
We then compute the cosine similarity between $d_{\mathrm{sum}}$ and the query vector $q$:
\begin{equation}
\label{eq:7}
\text{Similarity} = \cos(d_{\mathrm{sum}}, q).
\end{equation}

Unlike conventional query-item similarity, which evaluates each retrieved item independently, this similarity evaluates the retrieved set as a whole. A high score indicates that the collective semantics of the selected vectors are well aligned with the query. At the same time, because vector addition allows different vectors to contribute from different directions, this metric also captures relevance-aware complementarity: selected items can be individually distinct while jointly forming a semantic direction close to the query. 

For this experiment, 20\% of each data set was used as a set of retrieval queries. %These queries were submitted to VRSD, MMR and k-DPP for retrieval. 
For MMR, we set $\lambda$ values ranging from 0.2 to 0.9 to thoroughly and fairly evaluate the performance of the MMR algorithm. In each retrieval task, once a query is submitted to VRSD, MMR or k-DPP, $k$ vectors are returned. We sum these $k$ vectors to compute the cosine similarity between the sum vector and the query vector. The mean cosine similarity for all test queries is reported as "Sim. Mean" in Table \ref{tab:sim-div-combined}.

From the "Similarity Mean" columns in Table \ref{tab:sim-div-combined}, we observe that regardless of the value of $\lambda$ in MMR, as expected, VRSD achieves the highest score across datasets than MMR and k-DPP, confirming that the proposed greedy heuristic effectively optimizes the sum-vector objective. More importantly, the following pairwise diversity results show that this objective alignment does not collapse into redundant retrieval, but instead preserves strong diversity under a relevance-aware constraint.

\subsubsection{Diversity}

While similarity measures set-level query alignment, it does not by itself quantify the degree of redundancy among selected items. We therefore additionally adopt the commonly used metric in recommendation and RAG, pairwise similarity \citep{ziegler2005improving, jesse2023intra}. Formally, 
pairwise similarity is defined as the average similarity between all distinct pairs of elements in a given set. It is often used to measure the diversity of a set of objects.

\begin{comment}
Let \(\{d_1, d_2, \ldots, d_k\} \) be a set of $k$ vectors, and let \( \cos(d_i, d_j) \in [0, 1] \) denote the cosine similarity between the vectors \( d_i \) and \( d_j \).
\end{comment}

The \textbf{Diversity} metric of \(\{d_1, d_2, \ldots, d_k\} \) is defined as:
\begin{equation}
\label{eq:7}
\text{Diversity} = \frac{2}{k(k - 1)} \sum_{i=1}^{k} \sum_{j=i+1}^{k} \cos(d_i, d_j)
\end{equation}
A lower Diversity score implies a higher degree of diversity in the set. We report the average diversity scores for the 20\% test queries in the "Div. Mean" column of Table \ref{tab:sim-div-combined}. As shown in Table \ref{tab:sim-div-combined}, with increasing $\lambda$, the diversity scores of MMR gradually increase, which means that its diversity performance gradually decreases. This aligns with the MMR objective: The larger $\lambda$ emphasizes similarity over diversity. More notably, when comparing diversity scores of VRSD and MMR in Table \ref{tab:sim-div-combined}, we observe the following. 

\begin{itemize}
\item When $\lambda < 0.4$, MMR exhibits better diversity than VRSD, as it strongly prioritizes diversity.
\item When $\lambda \in (0.4, 0.6)$, VRSD and MMR have roughly comparable diversity.
\item When $\lambda > 0.6$, VRSD begins to outperform MMR in diversity, which is expected since the emphasis of MMR on diversity decreases as $\lambda$ increases.
\end{itemize}

We highlight MMR ($\lambda \in \{0.5, 0.6\}$) and VRSD in Table \ref{tab:sim-div-combined} to facilitate comparison. Results indicate that VRSD offers robust diversity: it matches MMR in the critical $0.5\text{-}0.6$ range and surpasses it as $\lambda$ increases. Furthermore, VRSD consistently outperforms k-DPP on both metrics across all datasets and $k$ values. Comparing the baselines, k-DPP behaves similarly to MMR with $\lambda \in [0.5, 0.7]$, typically achieving slightly higher similarity but lower diversity than MMR in this range.

%%%%%%% figure subjective evaluation-1 %%%%%%%
%%%%%%% figure subjective evaluation-1 %%%%%%%
%%%%%%% figure subjective evaluation-1 %%%%%%%
%%%%%%% figure subjective evaluation-1 %%%%%%%

\begin{comment}
\subsubsection{Qualitative comparison via 2D projections}
We compare VRSD with MMR and k-DPP from both relevance and diversity perspectives. To visualize behavioral differences during vector retrieval, we project each query vector together with the corresponding retrieved results into a two-dimensional space and plot them. Due to space limits, we report three representative queries from the SciQ dataset, each with $k\!=\!18$ retrieved vectors; for MMR we set $\lambda\!=\!0.5$ (additional examples appear in the Appendix). Figure~\ref{fig:proj} shows the results: in each subplot, the star denotes the projected query vector, and the remaining points are the projected retrieved vectors. The first row (MMR) includes a case (the third example) where the query lies clearly on one side of the retrieved set; the second row (k-DPP) exhibits a similar pattern (the second example). In contrast, the third row (VRSD) typically places the query near the geometric center of its retrieved set. In general, Figure~\ref{fig:proj} suggests that, compared to MMR and k-DPP, VRSD produces results that are more uniformly distributed around the query, achieving stronger diversity while maintaining high relevance.
\end{comment}

The above comparison between VRSD, MMR and k-DPP reveals that VRSD not only maximizes the similarity between the sum vector and the query vector, but also adaptively achieves diversity.
\begin{comment}
This observation suggests that leveraging the similarity between the sum vector and the query vector to simultaneously capture both similarity and diversity constraints is a justified and effective strategy in the context of vector retrieval.

\textbf{While the previous section compares the average performance of VRSD and MMR, we also perform a fine-grained analysis by comparing the results of each individual run for different algorithms. Further details of the implementation and extended experimental results can be found in Appendix C.}
\end{comment}

Considering that in most practical applications the desired diversity is similarity aware, diversity without any similarity constraint is often of limited significance. Hence, a preliminary conclusion can be drawn: VRSD outperforms MMR and k-DPP when considering both similarity and diversity in a unified framework.
\begin{comment}
Moreover, previous empirical studies have shown that the optimal value of the parameter of the MMR algorithm $\lambda$ usually lies in the range of 0.5 to 0.8\citep{wang2025diversityenhancesllmsperformance}.
\end{comment}

\subsection{Subjective Evaluation}
\begin{figure}[t]
    \centering
    \includegraphics[width=\textwidth]{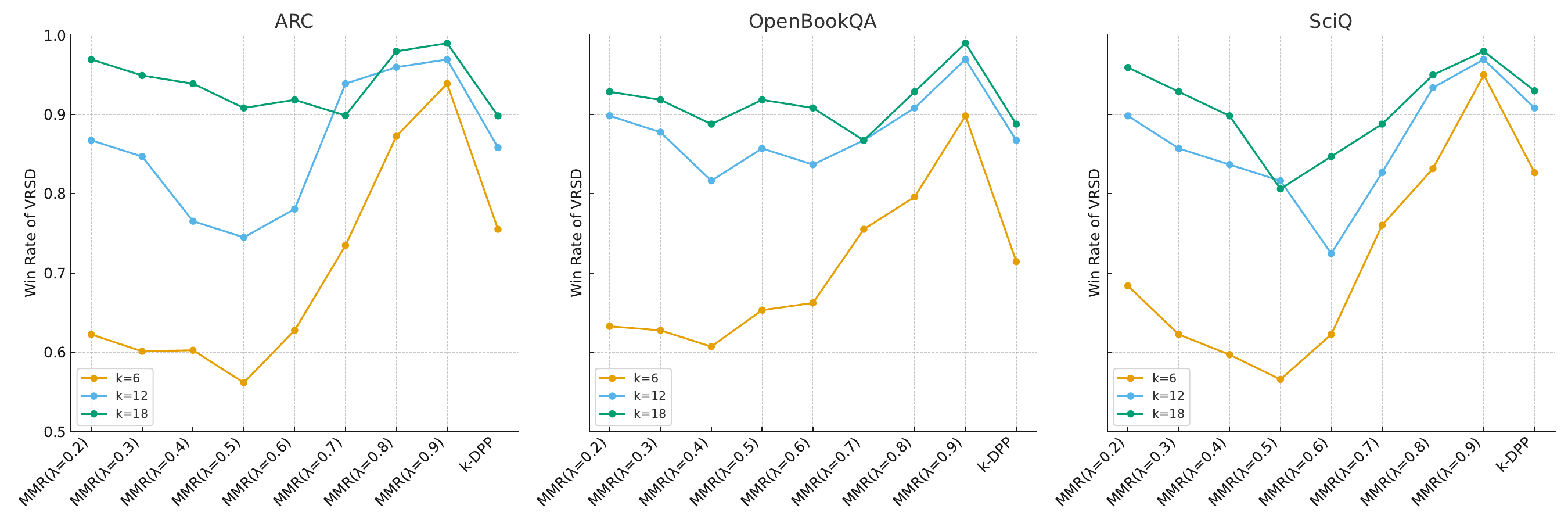}
    \caption{The winning rate of VRSD algorithm against MMR/k-DPP in simulated human evaluations on ARC, OpenBookQA, SciQ datasets under different $\lambda$ values for $k=6$, $k=12$, and $k=18$.}
    \label{fig:subj_eval_1}
    \vspace{-5.5mm}
\end{figure}

We incorporate subjective evaluation using LLMs to simulate human judgment. Recent studies indicate that LLM-based judgments correlate well with human preferences, offering a scalable alternative to traditional evaluation \citep{chiang2023can, dubois2023alpacafarm, chan2023chateval}. Given the strong "world knowledge" of modern LLMs, using them for subjective assessment is both reasonable and effective.

In detail, we employ GPT-4o \citep{openai_gpt4o_2024} to simulate diverse professional personas. Consistent with the objective experiments, we use 20\% of the queries from each dataset. For each query, GPT-4o reviews the original text of the $k$ retrieved results from VRSD and the baselines (MMR/k-DPP). The model is instructed to assign a score based on both relevance and diversity, where a higher score indicates better quality.

To ensure fairness and realism, we constructed a panel of 100 professional roles (e.g., scientists, educators). The detailed prompt and list of personas are provided in the Appendix \ref{detail_SE}. We compute the average score from all evaluators for each query and determine the "Win Rate" of VRSD, defined as the percentage of queries where VRSD outperforms the baselines.

Figure \ref{fig:subj_eval_1} presents results across three datasets. VRSD consistently achieves a win rate exceeding 50\% against MMR regardless of $\lambda$, though margin narrows at $\lambda \approx 0.5$ (consistent with prior findings on MMR's optimal range). Similarly, VRSD maintains a win rate >50\% against k-DPP. We observe that k-DPP performs comparably to MMR with $\lambda \in [0.6, 0.7]$ but lags behind MMR at $\lambda=0.5$.

Notably, VRSD's advantage over both baselines improves steadily as $k$ increases. This reflects a key theoretical distinction: while MMR and k-DPP rely on heuristic repulsion (penalizing similarity), VRSD selects \textit{complementary} items via vector addition. Consequently, as $k$ grows, VRSD more effectively accumulates diverse yet relevant features. These findings align with our objective evaluations, confirming that VRSD consistently outperforms baselines in balancing similarity and diversity. Comprehensive experimental results and ablation analyses are detailed in the Appendix \ref{more}\&\ref{comprehensive_AB}.

\section{Conclusion}

We introduce VRSD, a novel vector retrieval framework that jointly optimizes similarity and diversity through query-to-sum vector alignment. This formulation removes the need for manual parameter tuning and provides an inherent balance between similarity and diversity. By formally defining the resulting combinatorial optimization problem and proving its NP-completeness, we establish a rigorous theoretical foundation for the inherent difficulty of this task. Experiments on multiple benchmarks show that our efficient heuristic consistently outperforms MMR and k-DPP. Overall, VRSD offers a principled and practical alternative for unified retrieval, with geometric insights that inform future retrieval optimization for large language models and in-context learning.

\bibliographystyle{unsrtnat}
\bibliography{custom}

\clearpage
\onecolumn
\tableofcontents

\appendix
\section*{APPENDIX}
\section{Limitations}
The performance of VRSD is inherently based on the geometric properties of the underlying embedding space. Although we verified its robustness across three distinct models, the algorithm assumes that semantic compositionality, specifically through vector addition, holds effectively within the space. Consequently, highly anisotropic or poorly calibrated embedding spaces might limit the precision of the sum vector representation. Furthermore, while our current scope focuses on text-based interactions, real-world retrieval increasingly involves diverse data types. Future work will explore extending the VRSD framework to multimodal information (e.g., images or audio), investigating whether the sum-vector alignment principle generalizes to provide richer representations in multimodal embedding spaces.

\section{Extended Related Work}
Vector retrieval has become a cornerstone technique in a wide range of applications, including search, recommendation, in-context learning, and multimodal matching. The ability to retrieve semantically relevant results using vector similarity has driven progress in tasks such as open domain QA~\citep{chen2017reading,guu2020retrieval}, dense passage retrieval~\citep{khattab2020colbert,ding2020rocketqa}, and retrieval-augmented generation (RAG)~\citep{lewis2020retrieval,mao2020generation}. In addition to open-domain retrieval, vector-based methods have been successfully used in applications such as machine translation~\citep{khandelwal2020nearest}, language modeling~\citep{khandelwal2019generalization,alon2022neuro}, and in-context learning with LLMs~\citep{brown2020language}. Early work relied on sparse retrieval techniques such as BM25~\citep{robertson2009probabilistic} for prompt selection~\citep{liu2021makes}, while more recent approaches employ dense encoders~\citep{reimers2019sentence,shin2021constrained} and contrastive retrievers~\citep{rubin2021learning,wang2023learning}. Efforts such as UPRISE~\citep{cheng2023uprise} and PRAC~\citep{nie2022cross} retrieve demonstrations directly from training corpora to optimize in-context performance. Beyond NLP, vector retrieval is also widely used in recommendation systems. Here, similarity ensures the accuracy of the recommendation, but excessive similarity can lead to filter bubbles and limit content exploration. Diversity-aware methods, including MMR \citep{carbonell1998use}, Gumbel-softmax sampling, and graph-based diversification~\citep{deselaers2009jointly,ye2023complementary}, have been developed to mitigate these issues and promote long-tail item exposure. In multimodal and cross-modal retrieval, such as image-text matching, maintaining diversity is equally important. Methods based on mutual information maximization and contrastive learning~\citep{cheng2023uprise,nie2022cross} enhance retrieval robustness by reducing redundancy and encouraging semantic variety in modalities. Despite these advancements, most existing methods treat each candidate independently during retrieval, neglecting the interactions between selected items. This often results in limited coverage or redundant outputs, particularly in top-$k$ retrieval tasks. MMR remains the most widely used and important algorithm for balancing relevance and diversity, but its reliance on a tunable trade-off parameter $\lambda$ complicates deployment and introduces sensitivity problems~\citep{deselaers2009jointly}. 

In contrast, we adopt an approach fundamentally different from MMR and related algorithms. Instead of measuring relevance and diversity separately, we examine the similarity between the query vector and the sum vector formed by combining multiple candidate vectors, thereby capturing their joint semantics. At first glance, this may appear to be merely an additional similarity constraint in vector retrieval. However, from the perspective of vector space geometry, the similarity between the sum vector and the query vector implicitly introduces a diversity constraint: namely, the individual vectors contributing to the sum must approach the query vector from different directions. This approach effectively leverages the intrinsic properties of vector representations to simultaneously encode both similarity and diversity in retrieval.

\section{Proof of Proposition 1} \label{proof}
The key idea of the VRSD problem is the use of the sum vector to simultaneously capture both similarity and diversity. This holds with the conclusion that the sum of two vectors necessarily lies between the two original vectors. We illustrate this in a two-dimensional space in Figure \ref{fig:fig1}(c). This seems to be a geometric intuition. However, does the sum of two vectors still lie between the two vectors in an n-dimensional space? Moreover, does examining the similarity between the sum vector and the query vector impose both similarity and diversity constraints at the same time?

To address this, we provide an additional justification from both algebraic and algorithmic perspectives below.

\subsection{Algebraic Perspective}

\textbf{Proposition 1}
In a $n$-dimensional vector space, the sum of two vectors necessarily lies between the two original vectors. That is, let $u, v \in \mathbb{R}^n$, and $s=u+v$, so $s$ must be between $u$ and $v$.

\textbf{Proof:} The idea of proof is as follows.

In order to prove this proposition, we first prove that $u$, $v$, and $s$ lie in the same hyperplane, and then prove that the angle between $s$ and $u$ is less than the angle between $u$ and $v$, and the angle between $s$ and $v$ is also less than the angle between $u$ and $v$, therefore, $s$ must be between $u$ and $v$.

Firstly, according to the principles of linear algebra, two linearly independent vectors $u$ and $v$ in a vector space span a two-dimensional subspace (or hyperplane), denoted $\text{span}\{u, v\} \subseteq \mathbb{R}^n$. Any linear combination of $u$ and $v$ necessarily lies within this subspace. Since $s = u + v$ is a specific linear combination of the two, it follows that $s \in \text{span}\{u, v\}$, which means that $s$, $u$, and $v$ all lie in the same hyperplane.

Secondly, we will prove the following conclusion.

Given vector $u$, $v$, and $s=u+v$, let \( \theta \in (0, \pi) \) be the angle between $u$ and $v$, $\alpha$ be the angle between $s$ and $u$, and $\beta$ be the angle between $s$ and $v$, then $\alpha < \theta$ and $\beta < \theta$.

Let $\|u\| = a$, $\|v\| = b$. Then
\[
u \cdot v = \|u\|\|v\|\cos\theta = ab \cos\theta.
\]
and:
\[
\|s\| = \|u + v\| = \sqrt{a^2 + b^2 + 2ab\cos\theta}.
\]

Now consider the angle $\alpha$ between $s$ and $u$, using the cosine formula:
\[
\begin{split}
\cos\alpha &= \frac{s \cdot u}{\|s\| \cdot \|u\|} = \frac{(u + v) \cdot u}{\|s\| \cdot \|u\|} \\
&= \frac{a^2 + ab\cos\theta}{a \cdot \sqrt{a^2 + b^2 + 2ab\cos\theta}}.
\end{split}
\]
Simplify the equation:
\[
\cos\alpha = \frac{a + b\cos\theta}{\sqrt{a^2 + b^2 + 2ab\cos\theta}}.
\]

We now show that \( \cos\alpha > \cos\theta \) for all \( \theta \in (0, \pi) \), which implies \( \alpha < \theta \), because cosine is strictly decreasing on \( (0, \pi) \).

That is,
\[
\cos\alpha = \frac{a + b\cos\theta}{\sqrt{a^2 + b^2 + 2ab\cos\theta}} > \cos\theta.
\]

Multiply both sides by \( \sqrt{a^2 + b^2 + 2ab\cos\theta} \) and simplify the inequality:
\[
a + b\cos\theta > \cos\theta \cdot \sqrt{a^2 + b^2 + 2ab\cos\theta}.
\]

Now we discuss two cases; one is $\theta \in (0, \pi/2]$, the other is $\theta \in (\pi/2, \pi)$.

(1) $\theta \in (0, \pi/2]$.

In this case, $0\leq\cos\theta<1$, 

Squaring both sides (note: all terms are positive for $\theta \in (0, \pi/2]$):
\[
(a + b\cos\theta)^2 > \cos^2\theta (a^2 + b^2 + 2ab\cos\theta).
\]

Expand both sides of the inequality:
\[
\begin{split}
&a^2 + 2ab\cos\theta + b^2\cos\theta^2 \\
&> a^2\cos\theta^2 + b^2\cos\theta^2 + 2ab\cos\theta^3\\
&\Rightarrow a^2 + 2ab\cos\theta > a^2\cos\theta^2 + 2ab\cos\theta^3\\
&\Rightarrow a^2 + 2ab\cos\theta - a^2\cos\theta^2 - 2ab\cos\theta^3 > 0\\
&\Rightarrow a^2(1 - \cos\theta^2) + 2ab\cos\theta(1 - \cos\theta^2) > 0\\
&\Rightarrow (a^2 + 2ab\cos\theta)(1 - \cos\theta^2) > 0\\
\end{split}
\]
The last inequality above is obviously true because $a>0$, $b>0$, and $0\leq\cos\theta<1$. If we work backward from the last inequality, we can get $\cos\alpha>\cos\theta$.

(2)$\theta \in (\pi/2, \pi)$.

In this case, $-1<\cos\theta<0$, to prove
\[
a + b\cos\theta > \cos\theta \cdot \sqrt{a^2 + b^2 + 2ab\cos\theta},
\]

we further divide this case into two subcases.

(2.1) $a + b\cos\theta \geq 0$.

Because $\cos\theta \cdot \sqrt{a^2 + b^2 + 2ab\cos\theta} <0$, we get
\[
a + b\cos\theta > \cos\theta \cdot \sqrt{a^2 + b^2 + 2ab\cos\theta}.
\]

(2.2) $a + b\cos\theta < 0$.

Consider the inequality
\[
a + b\cos\theta > \cos\theta \cdot \sqrt{a^2 + b^2 + 2ab\cos\theta}.
\]

Because both sides of the inequality are less than 0, squaring both sides will result in the inequality being reversed, that is:
\[
\begin{split}
&(a + b\cos\theta)^2 < \cos^2\theta (a^2 + b^2 + 2ab\cos\theta)\\
&\Rightarrow a^2 + 2ab\cos\theta < a^2\cos\theta^2 + 2ab\cos\theta^3\\
&\Rightarrow a^2 + 2ab\cos\theta - a^2\cos\theta^2 - 2ab\cos\theta^3 < 0\\
&\Rightarrow a^2(1 - \cos\theta^2) + 2ab\cos\theta(1 - \cos\theta^2) < 0\\
&\Rightarrow (a^2 + 2ab\cos\theta)(1 - \cos\theta^2) < 0\\
&\Rightarrow a(a + b\cos\theta + b\cos\theta)(1 - \cos\theta^2) < 0
\end{split}
\]

In the above inequality, $a>0$, $(1 - \cos\theta^2)>0$, $a + b\cos\theta < 0$, $b\cos\theta<0$, so $(a + b\cos\theta + b\cos\theta) < 0$. 
Therefore, the last inequality above is also true, and if we work backward, we can get $\cos\alpha>\cos\theta$.

Thus,
\[
\cos\alpha > \cos\theta \Rightarrow \alpha < \theta.
\]

By symmetry, the same result applies to the angle $\beta$ between $s$ and $v$, that is,
\[
\cos\beta > \cos\theta \Rightarrow \beta < \theta.
\]

In summary, $u$, $v$, and $s$ lie in the same hyperplane, the angle between $s$ and $u$ is less than the angle between $u$ and $v$, and the angle between $s$ and $v$ is also less than the angle between $u$ and $v$, therefore, $s$ must be between $u$ and $v$.
\hfill\qedsymbol

\subsection{Algorithmic Perspective:}
Consider a query vector $q$ and a set of candidate vectors $d_0, d_1, \ldots, d_n$, sorted by their similarity to $q$. When the goal is to select a subset of vectors such that the sum vector is maximally similar to $q$, the algorithm first selects $d_0$ (since the sum of a single vector is itself). For the second selection, the next best candidate $d_x$ is chosen such that the new sum $s = d_0 + d_x$ is as close as possible to $q$. Based on the \textbf{Proposition 1}, this sum vector $s$ lies between $d_0$ and $d_x$, which implies that $d_0$ and $d_x$ are positioned on either side of $s$, and hence $d_0$ and $d_x$ are positioned on either side of $q$ when optimized. This introduces an implicit diversity constraint, since selected vectors are encouraged to approach $q$ from different directions.

Repeating this process means that each new vector is selected so that its addition pushes the sum vector closer to the query while maintaining angular spread, achieving a balance between similarity and diversity. Although our method superficially appears to only consider the similarity between the sum vector and the query, this selection mechanism inherently incorporates a diversity objective, which is also empirically validated in our experiments.

\section{Details on Baselines and Datasets}
\label{app:setup_details}
To ensure a rigorous evaluation, we benchmark VRSD against strong, industry-standard baselines and evaluate performance across diverse scientific question-answering datasets. This section details the theoretical formulation of the baselines, the implementation details, and the characteristics of the datasets used.

\subsection{Baselines}

We compare VRSD against two widely used methods that represent distinct paradigms (heuristic vs. probabilistic) in relevance-diversity trade-offs.

\paragraph{Maximal Marginal Relevance (MMR)}
MMR \citep{carbonell1998use} serves as our primary heuristic baseline. It greedily selects items by explicitly penalizing the similarity between a candidate item and the set of already selected items. Mathematically, it trades off query relevance and novelty via a weighting parameter $\lambda$. 
Due to its effectiveness and simplicity, MMR has become a standard reranking heuristic in Information Retrieval (IR) and Retrieval-Augmented Generation (RAG) pipelines. It is implemented out-of-the-box in popular frameworks such as LangChain \citep{chase2022langchain} and LlamaIndex \citep{Liu_LlamaIndex_2022}, which facilitates faithful and reproducible comparisons in our experiments.

\paragraph{Determinantal Point Processes (k-DPP)}
Complementary to the greedy nature of MMR, Determinantal Point Processes (DPP) provide a probabilistic framework for modeling diversity. A DPP assigns a probability measure to subsets of a finite candidate pool, where the probability of selecting a subset is proportional to the determinant of its kernel matrix. 
Intuitively, a DPP assigns larger scores to subsets whose items are both individually high-quality (high relevance) and mutually different (low pairwise redundancy), effectively implementing a built-in notion of ``repulsion'' between items.

In our experiments, we utilize the fixed-size variant, \textbf{k-DPP} \citep{kulesza2011kdpp}, which conditions the model to return exactly $k$ items. This aligns with common top-$k$ retrieval settings. To ensure scalability, we employ the \textbf{Fast-MAP-DPP} algorithm \citep{chen2018fastmap}, which makes DPP inference practical at the reranking scale without sacrificing relevance \citep{kulesza2012dpp, gillenwater2012nearoptimal, li2016efficientkdpp}. 
k-DPP has been successfully applied to diversify recommendation lists, improve multi-document summarization, and perform supervised video summarization \citep{gong2014seqdpp, sharghi2018seqdppimprove, cho2019mdsdpp}. Given these properties, k-DPP serves as a robust probabilistic baseline.

\subsection{Embedding Model}
For the main experiments, we employ the \texttt{all-mpnet-base-v2} embedding model \citep{song2020mpnet}. This is a robust, MPNet-based general-purpose encoder widely recognized as a strong baseline for dense retrieval and RAG systems \citep{all_mpnet_base_v2_card}. It provides a stable semantic space for evaluating the geometric properties of the sum vector. 

\subsection{Dataset Descriptions}
We utilize three publicly available datasets focusing on scientific and commonsense reasoning to evaluate the retrieval performance.

\begin{itemize}
    \item \textbf{ARC-DA} \citep{bhakthavatsalam2021think}: The ARC-DA (AI2 Reasoning Challenge, Direct Answer) dataset is designed to evaluate question answering systems on science-related multiple-choice questions. It requires the retrieval system to identify precise scientific facts to support reasoning.
    
    \item \textbf{OpenBookQA} \citep{mihaylov2018openbookqa}: This dataset is constructed to test a system’s ability to combine scientific facts with common sense knowledge. It consists of elementary-level science questions that require multi-hop reasoning, making diversity in retrieval crucial for covering different aspects of the required knowledge.
    
    \item \textbf{SciQ} \citep{welbl2017sciq}: The SciQ dataset is a crowdsourced collection of multiple-choice science questions derived from real science curricula. It is intended to support the development and evaluation of educational QA systems, providing a testbed for retrieving academically relevant explanations.
\end{itemize}

\section{More Experimental Results} \label{more}
\begin{figure*}[htbp]
    \centering
    \begin{subfigure}{\textwidth}
        \includegraphics[width=\textwidth, height=0.25\textheight]{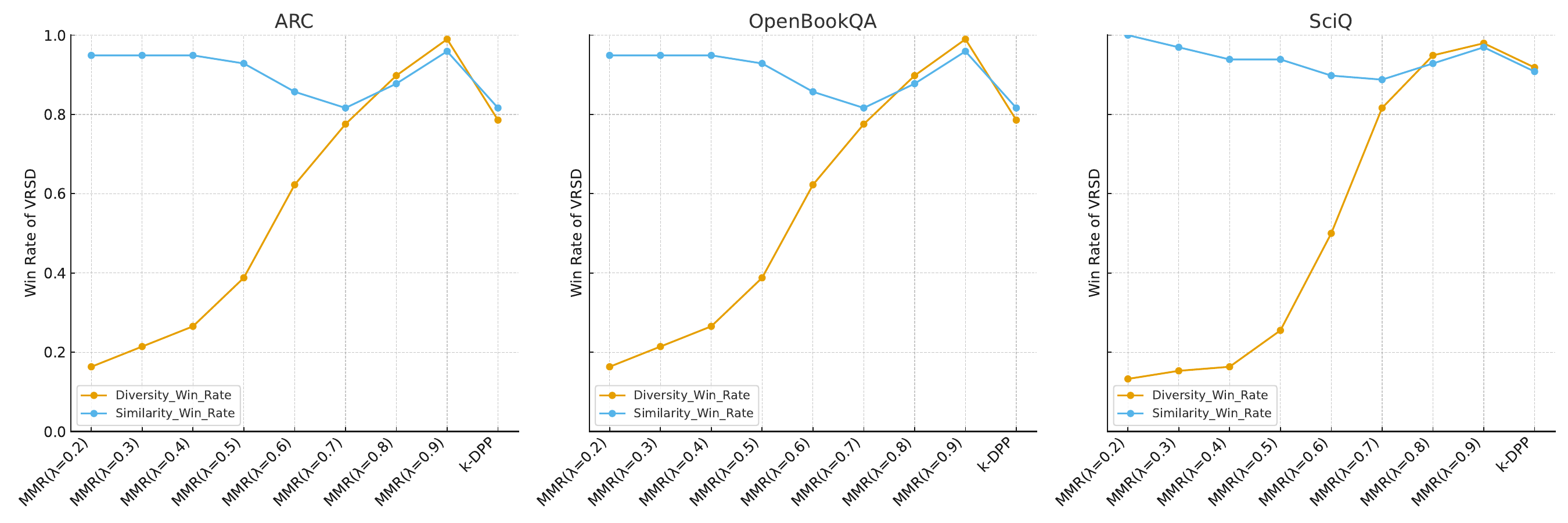}
        \caption{k=6}
    \end{subfigure}
    \begin{subfigure}{\textwidth}
        \includegraphics[width=\textwidth, height=0.25\textheight]{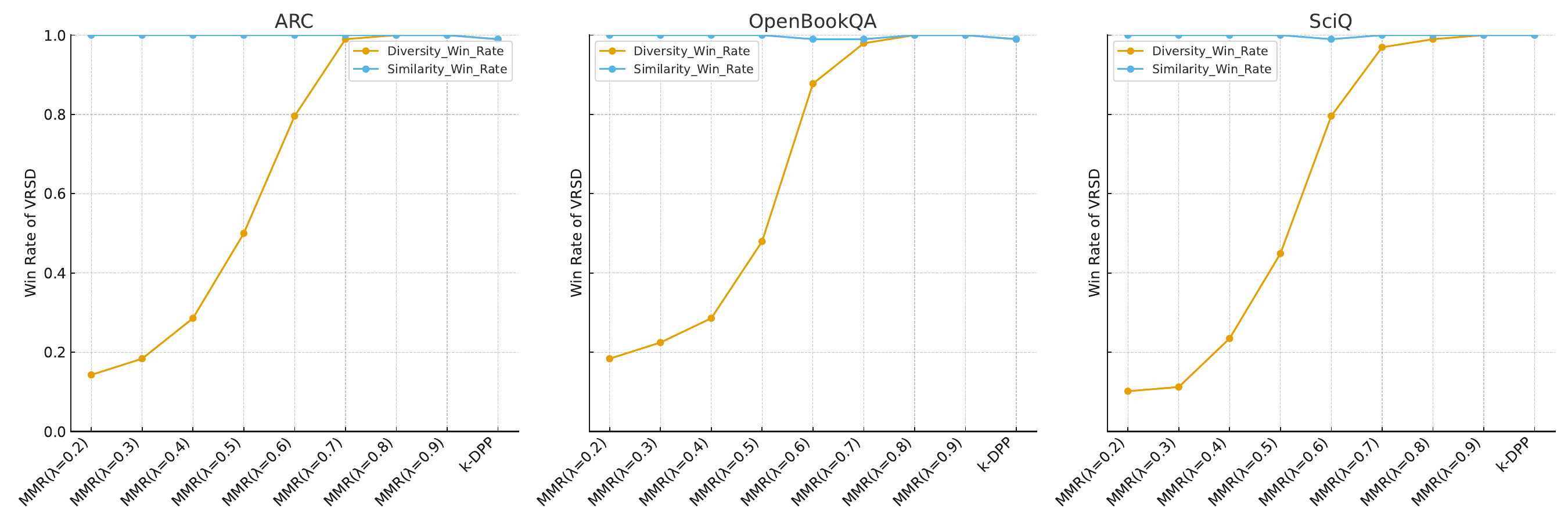}
        \caption{k=12}
    \end{subfigure}
    \begin{subfigure}{\textwidth}
        \includegraphics[width=\textwidth, height=0.25\textheight]{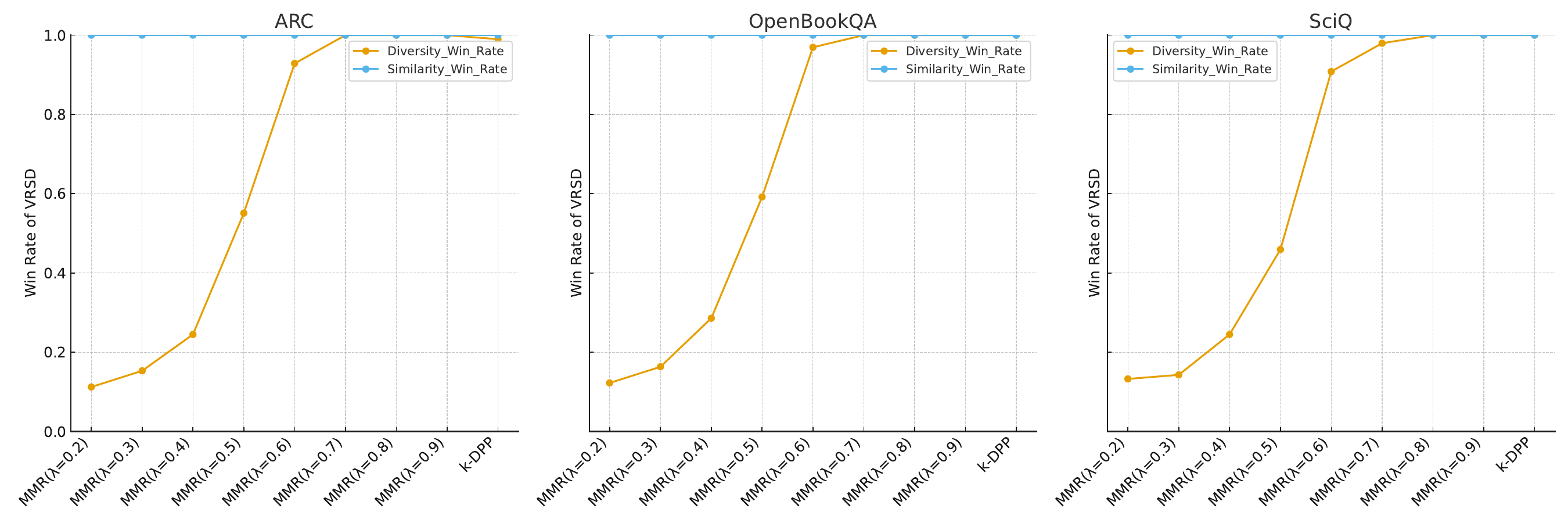}
        \caption{k=18}
    \end{subfigure}
    \caption{Win Rate of VRSD vs MMR/k-DPP across Different Datasets (k=6,12,18).}
    \label{fig:winrate_all}
\end{figure*}

\subsection{Experimental Objectives}
Our experiments aim to address two primary objectives.

\begin{itemize}
    \item In the formulation of the VRSD problem, we use the similarity between the sum vector and the query vector to capture both similarity and diversity constraints in vector retrieval. A key question, therefore, is whether this similarity measure - between the sum vector and the query vector - can effectively satisfy the requirements of both similarity and diversity in real-world retrieval tasks. This can be empirically tested by examining the performance of the VRSD algorithm in terms of similarity and diversity. Although the VRSD is a heuristic algorithm and does not guarantee an optimal solution to the VRSD problem, its objective is to maximize the similarity between the sum vector and the query vector, which should inherently satisfy the similarity requirement in retrieval. On the other hand, as previously discussed, the requirement to maximize this similarity implicitly encourages the constituent vectors of the sum vector to approach the query vector from diverse directions, thereby embedding a diversity constraint. Therefore, if the VRSD algorithm also exhibits strong performance in promoting diversity during retrieval, it would provide further evidence that using the similarity between the sum vector and the query vector is a reasonable and effective formulation for addressing both constraints in the VRSD problem.

    \item Conduct a thorough comparison between the VRSD algorithm and the MMR/k-DPP algorithm in terms of both similarity and diversity. Both VRSD and MMR/k-DPP are approaches that aim to balance similarity and diversity in vector retrieval. However, they adopt fundamentally different strategies. MMR performs this balance by adjustment of a parameter $\lambda$, which requires empirical tuning and lacks a principled way to determine its optimal value in advance. k-DPP provides a probabilistic way to favor diverse yet high-quality subsets from a finite candidate pool. In contrast, the VRSD algorithm achieves this balance by explicitly maximizing the similarity between the sum vector and the query vector, thus directly enforcing similarity constraints and implicitly introducing diversity constraints. This makes VRSD a more adaptive method that inherently balances the two objectives. Which of these two approaches is more effective?
\end{itemize}

To verify these two objectives, we designed both objective and subjective experiments. Objective experiments involve evaluating the results of VRSD and MMR/k-DPP using similarity metrics and diversity metrics. Subjective experiments simulate human evaluation using LLM to mimic ordinary users to score the retrieval results of both algorithms. Given the powerful "world knowledge" of modern LLM, this simulation-based subjective evaluation is reasonable and credible.

The following provides additional objective experimental results and further details of the subjective evaluation.

%%%%%%%%%%%%%%%%%%%%prompt template%%%%%%%%%%%%%%%%%%%%
\begin{table*}[ht]
\centering
%\small
%\setlength{\tabcolsep}{16pt}  % 可选：减小列间距以压缩宽度
\caption{Prompt template used for LLM-based evaluation.}
\resizebox{\textwidth}{!}{%
\begin{tabular}{p{3cm} | p{10.5cm}}
\hline
\textbf{Field} & \textbf{Content / Template} \\
\hline
System message &
\texttt{You are an evaluation expert. Please score strictly as '\{role\}'.}
\\
\hline
User prompt &
\texttt{You need to evaluate the following retrieval results from the perspective of '\{role\}', strictly following the instructions below.}\\
 & \texttt{Query: \{query\}}\\
 & \texttt{Retrieved Results:}\\
 & \texttt{- \{item\_1\}}\\
 & \texttt{- \{item\_2\}}\\
 & \texttt{...}\\
 & \texttt{\{eval\_instruction\}}\\
 & \texttt{Please strictly output the result in the following JSON format (do not add extra explanations):}\\
 & \texttt{\{"role": "<role name>", "score": <integer score>, "comment": "<brief reason>"\}}
\\
\hline
Evaluation instruction & \texttt{The retrieval results are obtained from the database using similarity and diversity retrieval algorithms. Please comprehensively evaluate the similarity and diversity of these retrieval results with respect to the query, according to the similarity and diversity criteria. Give your evaluation as a score (1-100), and briefly explain the reason in one sentence. A higher score indicates better similarity and diversity with respect to the query; a lower score indicates worse.}
\\
\hline
Variables & \texttt{\{role\}}, \texttt{\{query\}}, \texttt{\{item\_i\}}, \texttt{\{eval\_instruction\}} are dynamically filled in each evaluation instance.
\\
\hline
\end{tabular}
}
\label{tab:llm-prompt-template}
\end{table*}
%%%%%%%%%%%%%%%%%%%%prompt template%%%%%%%%%%%%%%%%%%%%

\subsection{Additional Objective Experiments: Microscopic Comparison of VRSD and MMR/k-DPP}

While Table1 in the paper report average performance, we further analyze individual retrieval results. 

For each query result obtained by VRSD and MMR/k-DPP (with $\lambda$ values ranging from 0.2 to 0.9), we calculate their similarity and diversity scores. Then, aggregate all retrieval results to calculate the percentage of cases in which VRSD outperforms MMR/k-DPP in terms of similarity, which is referred to as the similarity win rate of VRSD over MMR/k-DPP, and the percentage of cases in which VRSD outperforms MMR/k-DPP in terms of diversity, which is referred to as the diversity win rate of VRSD over MMR/k-DPP. That is,

\begin{equation}
\label{eq:sim-win-rate}
\mathrm{Similarity\ Win\ Rate}
=
\frac{
\mathrm{count}\left(
\mathrm{Similarity}_{\mathrm{VRSD}}
>
\mathrm{Similarity}_{\mathrm{MMR/k\text{-}DPP}}
\right)
}{
\mathrm{Total\ number\ of\ queries}
}.
\end{equation}

\begin{equation}
\label{eq:div-win-rate}
\mathrm{Diversity\ Win\ Rate}
=
\frac{
\mathrm{count}\left(
\mathrm{Diversity}_{\mathrm{VRSD}}
<
\mathrm{Diversity}_{\mathrm{MMR/k\text{-}DPP}}
\right)
}{
\mathrm{Total\ number\ of\ queries}
}.
\end{equation}

Figure \ref{fig:winrate_all} in this Appendix illustrates VRSD’s win rate vs. MMR/k-DPP in terms of similarity and diversity. Figure \ref{fig:winrate_all}(a) shows the situation when $k=6$, Figure \ref{fig:winrate_all}(b) for $k=12$, and Figure \ref{fig:winrate_all}(C) for $k=18$. It can also be seen that VRSD maintains a consistent advantage in similarity, with similarity win rates often exceeding 80\%, even close to 1. For diversity, MMR performs better when $\lambda < 0.5$, but VRSD gradually overtakes $\lambda > 0.5$ and continues to improve.

From Figure \ref{fig:winrate_all}, it can also be seen that as $k$ increases, the advantage of VRSD diversity over MMR/k-DPP becomes more pronounced. 

\subsection{Details of the Subjective Experiments}
\label{detail_SE}
In detail, we employ GPT-4o \citep{openai_gpt4o_2024} to simulate people from diverse professional backgrounds in scoring the retrieval results of VRSD and MMR. The detailed procedure is as follows.

To ensure consistency with previous experiments, we again use 20\% of each data set as the retrieval query set and submit each query to both the VRSD and MMR algorithms. For each query, both algorithms retrieve $k$ results. Unlike the objective evaluation, we retain the original textual content of the retrieved results rather than their vector representations.

We then present the query text and the $k$ retrieved texts to GPT-4o, assigning it the persona of a specific professional. The GPT-4o is instructed to rate the retrieval results based on both similarity and diversity with respect to the query. A higher score indicates that the $k$ results are both highly relevant and sufficiently diverse.

Table~\ref{tab:llm-prompt-template} is the prompt template used for the LLM-based evaluation.

\subsubsection{Specification of Simulated Evaluator Roles}
To ensure fairness, realism, and domain diversity in the simulated evaluation process, we constructed a panel of 100 professional roles guided by the following principles:

\begin{itemize}
    \item \textbf{Comprehensive Coverage of Industry Domains:} The set of professionals spans many broad sectors, including healthcare, education, technology, finance, law, science and engineering, social services, agriculture, and others. This wide coverage reflects a realistic cross section of modern professional society.

    \item \textbf{Balanced Representation to Prevent Domain Dominance:} Each sector contributes only a limited number of roles to avoid over representation of commonly studied or high-profile professions such as “manager” or “engineer”. This promotes equitable distribution across diverse knowledge domains.

    \item \textbf{Diversity of Professional Functions:} The selected roles encompass a wide spectrum of occupations-from academic and analytical to hands-on and service-oriented-ensuring the inclusion of technical, managerial, and operational perspectives.

    \item \textbf{Neutrality with Respect to Demographic or Socioeconomic Bias:} The role selection avoids professions that may carry strong gender, geographic or hierarchical connotations, thus aiming at an impartial simulation of professional perspectives.
\end{itemize}
These design choices collectively contribute to a more representative and equitable simulated evaluation framework, mitigating potential biases, and improving the generalization of the results.

The following is the list of 100 simulated professional evaluators.

\subsubsection{List of 100 Simulated Professional Evaluators}
\begin{quote}
1. Education: High School History Teacher\\
2. Healthcare: Cardiologist\\
3. Law: Corporate Legal Advisor\\
4. Technology: AI Researcher\\
5. Finance: Bank Risk Manager\\
6. Journalism: Senior Reporter\\
7. Logistics: Warehouse Dispatcher\\
8. Manufacturing: Mechanical Engineer\\
9. Healthcare: Public Health Expert\\
10. Education: University Lecturer (Economics)\\
11. Agriculture: Farm Owner\\
12. IT: Frontend Developer\\
13. Retail: Supermarket Manager\\
14. Food Services: Head Chef\\
15. Public Services: Community Service Worker\\
16. Arts: Graphic Designer\\
17. Research: Physics Lab Researcher\\
18. Gaming: Game Designer\\
19. Film: Documentary Director\\
20. Law: Prosecutor\\
21. Sports: Athletic Trainer\\
22. Architecture: Interior Designer\\
23. Energy: Petroleum Engineer\\
24. Healthcare: Head Nurse\\
25. Technology: Data Scientist\\
26. Education: K12 Programming Teacher\\
27. Agriculture: Agricultural Supply Chain Analyst\\
28. Finance: Securities Analyst\\
29. Business: E-commerce Operations Specialist\\
30. Culture: Book Editor\\
31. Aviation: Aircraft Maintenance Technician\\
32. Healthcare: Pharmacist\\
33. Arts: Freelance Illustrator\\
34. Energy: Wind Farm Manager\\
35. IT: Systems Architect\\
36. Architecture: Structural Engineer\\
37. Media: Content Creator\\
38. Food Services: Barista\\
39. Technology: Autonomous Vehicle Developer\\
40. Education: Education Policy Researcher\\
41. Healthcare: Clinical Data Analyst\\
42. Finance: Financial Advisor\\
43. Research: Environmental Scientist\\
44. Public Affairs: Policy Advisor\\
45. Arts: Art Critic\\
46. Gaming: Game Tester\\
47. Manufacturing: Industrial Robot Operator\\
48. Journalism: Foreign Correspondent\\
49. Finance: Venture Capital Partner\\
50. Public Services: Urban Transportation Planner\\
51. Education: Special Education Teacher\\
52. Logistics: International Logistics Coordinator\\
53. IT: Cybersecurity Analyst\\
54. Research: Anthropology PhD Student\\
55. Healthcare: Nutritionist\\
56. Finance: Actuary\\
57. Education: Study Abroad Consultant\\
58. IT: Database Administrator\\
59. Food Services: Pastry Chef\\
60. Culture: Playwright\\
61. Manufacturing: Semiconductor Process Engineer\\
62. Agriculture: Agricultural Extension Officer\\
63. Arts: Contemporary Art Curator\\
64. Public Services: Sanitation Manager\\
65. Energy: Nuclear Plant Operator\\
66. Technology: Blockchain Developer\\
67. Gaming: Indie Game Developer\\
68. Architecture: BIM Modeler\\
69. Journalism: Tech Section Editor\\
70. Education: Online Course Developer\\
71. IT: DevOps Engineer\\
72. Finance: Cryptocurrency Trader\\
73. Healthcare: Genomics Specialist\\
74. Business: Brand Strategy Consultant\\
75. Law: Environmental Lawyer\\
76. Public Affairs: Policy Research Assistant\\
77. Agriculture: Smart Farming Engineer\\
78. Energy: Solar Power Engineer\\
79. Arts: Photographer\\
80. Education: College Admissions Director\\
81. Technology: Human-Computer Interaction Expert\\
82. Logistics: Express Delivery Manager\\
83. Food Services: F\&B Marketing Specialist\\
84. Manufacturing: Aerospace Equipment Technician\\
85. Journalism: Investigative Journalist\\
86. Law: Family Dispute Mediator\\
87. IT: Technical Product Manager\\
88. Public Services: Fire Safety Evaluator\\
89. Arts: Music Producer\\
90. Healthcare: Rehabilitation Therapist\\
91. Business: Startup Founder\\
92. Finance: Credit Analyst\\
93. Education: Adult Vocational Trainer\\
94. Research: Neuroscientist\\
95. IT: Cloud Architect\\
96. Public Affairs: NGO Project Executive\\
97. Architecture: Sustainable Building Consultant\\
98. Gaming: Narrative Designer\\
99. Culture: Ancient Book Restorer\\
100. Public Safety: Online Public Opinion Analyst\\
\end{quote}

\section{Ablation Study on Embedding Models}
\label{comprehensive_AB}
In contrast to MMR, VRSD does not require parameter tuning. However, because the choice of embedding model can influence vector retrieval, we conducted an ablation study across different embeddings. More specifically, beyond the previously employed all-mpnet-base-v2 embedding model \citep{all_mpnet_base_v2_card, song2020mpnet}, we repeated the main experiments using two additional and distinct embedding models, namely bge-m3 \citep{bge_m3_card, chen2024bge}, and all-MiniLM-L6-v2 \citep{all_minilm_l6_v2_card, wang2020minilm}. Our question is whether VRSD consistently achieves a better relevance--diversity balance across different representation spaces (varying in dimensionality and training paradigm) and whether its advantage grows with the target set size $k$.

\subsection{Embedding Models and Their Characteristics}
\label{app:emb:model}
\noindent\textbf{\texttt{all-mpnet-base-v2}} (768 dims, English): a robust MPNet-based general-purpose encoder that serves as a strong baseline for retrieval/RAG systems. It ensures stable semantic consistency and produces relatively uniform embeddings, which makes it ideal for assessing methods in a mature English semantic space.\\
\textbf{\texttt{bge-m3}} (1024 dims, multilingual, multi-functional): a BGE-family model trained with instruction-style signals. This model often produces clearer semantic clusters across languages and domains, enabling us to observe algorithmic behavior in a higher-dimensional, cluster-friendly space.\\
\textbf{\texttt{all-MiniLM-L6-v2}} (384 dims, English, lightweight): a compact low-latency encoder with lower capacity/resolution. It is used to test algorithmic stability in a lower-dimensional (slightly noisier) embedding space.

\subsection{Main Findings}
\label{app:emb:results}
The experimental setup and evaluation metrics are identical to those described in Section 4.
Table \ref{tab:sim-div-models} summarizes the results on ARC dataset for three encoders and multiple $k$'s, Figure \ref{fig:subj_eval_models} shows the win rate of VRSD versus MMR/k-DPP when considering the balance between relevance and diversity. The following is what can be seen from Table \ref{tab:sim-div-models} and Figure \ref{fig:subj_eval_models}.
\\\textbf{(1) Stable advantage of VRSD.} Across all encoders and values of $k$, VRSD achieves higher similarity than strong baselines while achieving good diversity, producing a consistently better relevance--diversity balance. For example, under all-mpnet-base-v2 and all-MiniLM-L6-v2, VRSD simultaneously improves Sim. Mean (larger is better) and reduces Div. Mean (smaller is better) among the selected items.
\\\textbf{(2) Performance differences between MMR and k-DPP across embedding models.} With all-mpnet-base-v2 and all-MiniLM-L6-v2, MMR (at its near-optimal $\lambda$) tends to slightly outperform $k$-DPP; with \texttt{bge-m3}, $k$-DPP is relatively stronger. This likely reflects \texttt{bge-m3}'s clearer cluster structure and higher dimensionality, where k-DPP's repulsion can more effectively avoid intra-cluster duplication.
\\\textbf{(3) VRSD gains grow with larger $k$.} When $k$ increases from 6 to 18, VRSD's improvements over MMR/$k$-DPP become more pronounced in both higher Similarity and lower Diversity. As the value of $k$ gets larger and the global layout matters more, VRSD better controls redundancy while maintaining strong query relevance.

\subsection{Analysis and Discussion}
\label{app:emb:analysis}
\paragraph{Why is VRSD more stable?}
MMR linearly trades off “pull-to-query” and “repel-among-results” via $\lambda$; $k$-DPP encourages diversity via a determinant-based repulsion that can, in some spaces, push items outward and sacrifice query relevance. Both essentially \emph{separate} relevance and repulsion. In contrast, VRSD centers on \emph{complementarity}: it selects items that simultaneously contribute to the query’s semantics and add non-redundant information to the current set. This unified criterion adapts better across embedding geometries and, crucially, scales with $k$: as $k$ grows, complementarity keeps bringing in new facets instead of repeating existing ones, improving both relevance and diversity.

\begin{comment}
\paragraph{Encoder granularity and method interaction.}
In the lower dimensional \texttt{all-MiniLM-L6-v2} space, representation capacity is limited and the class boundaries are fuzzier; 'pure repulsion' strategies risk discarding relevant items or missing complementary facets, so VRSD shows the largest advantage there. In \texttt{bge-m3}, the tuning of the multilingual instruction and the higher dimensionality induce clearer clusters, where $k$-DPP’s repulsion better avoids near-duplicates; still, VRSD remains superior in overall balance, as shown in Figure\ref{fig:subj_eval_models}. For \texttt{all-mpnet-base-v2}, a well-established English encoder, the three methods are closer, yet VRSD consistently maintains or improves similarity while reducing redundancy.
\end{comment}

\paragraph{Conclusion.}
VRSD exhibits more stable and robust relevance--diversity trade-offs across three substantially different embedding models and multiple $k$ values, with its advantage amplifying as $k$ increases. This supports our thesis that \emph{complementarity} (based on the sum vector), as a unified constraint, adapts better across vector spaces than strategies that separately optimize relevance and repulsion (MMR/$k$-DPP).

%%%%%%% table--sim-div-models %%%%%%%
\begin{table*}[htb]
\centering
%\small  % 可选：使表格字体略小以节省空间
%\setlength{\tabcolsep}{9.5pt}  % 可选：减小列间距以压缩宽度
\caption{Similarity and Diversity scores of different algorithms on ARC, using three different embedding models.}
\resizebox{\textwidth}{!}{%
\begin{tabular}{l|l|cc|cc|cc}
\toprule
$k$ & \textbf{Algorithms} & \multicolumn{2}{c|}{\textbf{ARC(all-mpnet-base-v2)}} & \multicolumn{2}{c|}{\textbf{ARC(bge-m3)}} & \multicolumn{2}{c}{\textbf{ARC(all-MiniLM-L6-v2)}} \\
 &  & Sim. Mean & Div. Mean & Sim. Mean & Div. Mean & Sim. Mean & Div. Mean \\
\midrule
\multirow{9}{*}{6}
& MMR($\lambda$=0.2) & 0.6797 & 0.2766 & 0.7404 & 0.4676 & 0.6659 & 0.2685 \\
& MMR($\lambda$=0.3) & 0.6859 & 0.2789 & 0.7427 & 0.4688 & 0.6695 & 0.2720 \\
& MMR($\lambda$=0.4) & 0.6913 & 0.2844 & 0.7462 & 0.4719 & 0.6748 & 0.2813 \\
& \textbf{MMR($\lambda$=0.5)} & \textbf{0.6987} & \textbf{0.2988} & \textbf{0.7504} & \textbf{0.4773} & \textbf{0.6840} & \textbf{0.2919} \\
& \textbf{MMR($\lambda$=0.6)} & \textbf{0.7047} & \textbf{0.3178} & \textbf{0.7542} & \textbf{0.4912} & \textbf{0.6922} & \textbf{0.3106} \\
& MMR($\lambda$=0.7) & 0.7065 & 0.3396 & 0.7558 & 0.5050 & 0.6937 & 0.3323 \\
& MMR($\lambda$=0.8) & 0.7022 & 0.3631 & 0.7536 & 0.5216 & 0.6893 & 0.3583 \\
& MMR($\lambda$=0.9) & 0.6928 & 0.3914 & 0.7470 & 0.5438 & 0.6776 & 0.3941 \\
& k-DPP & 0.7081 & 0.3382 & 0.7585 & 0.4838 & 0.6932 & 0.3386 \\
& \textbf{VRSD} & \textbf{0.7161} & \textbf{0.3109} & \textbf{0.7614} & \textbf{0.4878} & \textbf{0.7052} & \textbf{0.2976} \\
\midrule
\multirow{9}{*}{12}
& MMR($\lambda$=0.2) & 0.6780 & 0.2399 & 0.7428 & 0.4518 & 0.6647 & 0.2348 \\
& MMR($\lambda$=0.3) & 0.6869 & 0.2442 & 0.7458 & 0.4533 & 0.6710 & 0.2384 \\
& MMR($\lambda$=0.4) & 0.6981 & 0.2535 & 0.7505 & 0.4569 & 0.6797 & 0.2448 \\
& \textbf{MMR($\lambda$=0.5)} & \textbf{0.7073} & \textbf{0.2661} & \textbf{0.7548} & \textbf{0.4642} & \textbf{0.6935} & \textbf{0.2580} \\
& \textbf{MMR($\lambda$=0.6)} & \textbf{0.7147} & \textbf{0.2857} & \textbf{0.7589} & \textbf{0.4753} & \textbf{0.7005} & \textbf{0.2805} \\
& MMR($\lambda$=0.7) & 0.7143 & 0.3122 & 0.7609 & 0.4881 & 0.7000 & 0.3057 \\
& MMR($\lambda$=0.8) & 0.7100 & 0.3323 & 0.7579 & 0.5058 & 0.6936 & 0.3312 \\
& MMR($\lambda$=0.9) & 0.7022 & 0.3506 & 0.7536 & 0.5191 & 0.6841 & 0.3523 \\
& k-DPP & 0.7155 & 0.3136 & 0.7634 & 0.4668 & 0.7003 & 0.3061 \\
& \textbf{VRSD} & \textbf{0.7332} & \textbf{0.2695} & \textbf{0.7682} & \textbf{0.4638} & \textbf{0.7203} & \textbf{0.2607} \\
\midrule
\multirow{9}{*}{18}
& MMR($\lambda$=0.2) & 0.6712 & 0.2213 & 0.7381 & 0.4411 & 0.6572 & 0.2139 \\
& MMR($\lambda$=0.3) & 0.6789 & 0.2253 & 0.7418 & 0.4433 & 0.6662 & 0.2164 \\
& MMR($\lambda$=0.4) & 0.6918 & 0.2332 & 0.7467 & 0.4481 & 0.6747 & 0.2248 \\
& \textbf{MMR($\lambda$=0.5)} & \textbf{0.7039} & \textbf{0.2475} & \textbf{0.7522} & \textbf{0.4547} & \textbf{0.6901} & \textbf{0.2378} \\
& \textbf{MMR($\lambda$=0.6)} & \textbf{0.7117} & \textbf{0.2674} & \textbf{0.7567} & \textbf{0.4635} & \textbf{0.6983} & \textbf{0.2584} \\
& MMR($\lambda$=0.7) & 0.7105 & 0.2900 & 0.7576 & 0.4776 & 0.6975 & 0.2826 \\
& MMR($\lambda$=0.8) & 0.7035 & 0.3143 & 0.7552 & 0.4928 & 0.6911 & 0.3055 \\
& MMR($\lambda$=0.9) & 0.6944 & 0.3326 & 0.7506 & 0.5062 & 0.6800 & 0.3269 \\
& k-DPP & 0.7112 & 0.2933 & 0.7605 & 0.4568 & 0.6969 & 0.2864 \\
& \textbf{VRSD} & \textbf{0.7344} & \textbf{0.2454} & \textbf{0.7666} & \textbf{0.4574} & \textbf{0.7223} & \textbf{0.2358} \\
\bottomrule
\end{tabular}
}
\label{tab:sim-div-models}
\end{table*}
%%%%%%% table--sim-div-models %%%%%%%

%%%%%%% figure subjective evaluation-models %%%%%%%
\begin{figure*}[htb]
    \centering
    \includegraphics[width=\textwidth]{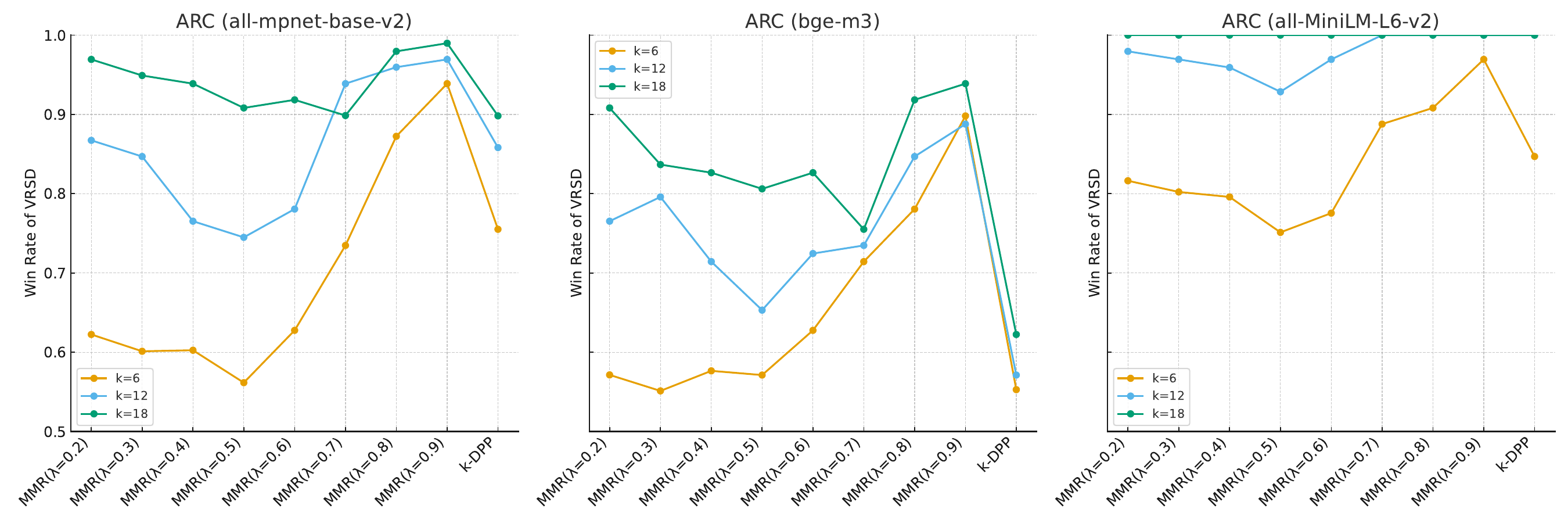}
    \caption{The winning rate of VRSD algorithm against MMR/k-DPP algorithms in simulated human evaluations on ARC dataset using different embedding models.}
    \label{fig:subj_eval_models}
\end{figure*}
%%%%%%%%%%%%%%%%%%%%%%%%%%%%%%%%%%%%%%%%%%%%%%%%%%%%%%%%%%%%

%\newpage
%\input{checklist.tex}

\end{document}